\documentclass[aps,prl,twocolumn,showpacs,superscriptaddress,longbibliography]{revtex4-2}

\usepackage{amsmath,amssymb,amsthm}
\usepackage{braket}
\usepackage{svg}
\usepackage{graphicx}
\usepackage{dcolumn}
\usepackage{bm}
\usepackage{balance}
\usepackage{physics}
\usepackage{xcolor}
\usepackage{booktabs}
\usepackage{hyperref}
\hypersetup{
    colorlinks,
    linkcolor={blue},
    citecolor={blue},
    urlcolor={blue}
}

\definecolor{mscolor}{rgb}{0,0.5,0.5}

\newcommand{\us}{\ \mu\rm{s}}
\newcommand{\um}{\ \mu\rm{m}}

\newcommand{\uwmphysics}{Department of Physics, University of Wisconsin-Madison, Madison, Wisconsin 53706, USA}
\newcommand{\uwmchemistry}{Department of Chemistry, University of Wisconsin-Madison, Madison, Wisconsin 53706, USA}
\newcommand{\INFQmadison}{Infleqtion, Madison, Wisconsin 53703, USA}

\newcommand {\rsub}[1]{\textcolor{black}{#1}}
\newcommand{\rsubmath}[1]{{\color{black}#1}}

\begin{document}

\title{Entangling gate performance and fidelity limits with neutral atom F\"orster resonances}

\author{Sam A. Norrell}
\thanks{These authors contributed equally to this work.}
\affiliation{\uwmphysics}

\author{Yufei Shen}
\thanks{These authors contributed equally to this work.}
\affiliation{\uwmphysics}

\author{Mark Saffman}
\affiliation{\uwmphysics}
\affiliation{\INFQmadison}

\author{Matthew Otten}
\affiliation{\uwmphysics}
\affiliation{\uwmchemistry}

\date{\today}

\begin{abstract}
Neutral-atom entangling gates are commonly analyzed with a single effective Rydberg-pair state, but near F\"orster resonances the pair manifold contains resonantly coupled interaction channels that change both the control landscape and the achievable fidelity.
We develop a two-eigenstate model for this regime and show that when allowing for coupling to both pair states in the resonance, the gate fidelity is bounded by $\mathcal{F}\leq 1-(\pi/2)/(V\tau_R)$, for interaction strength $V$ and Rydberg lifetime $\tau_R$.
We construct a gate protocol that saturates this bound in the large-Rabi-frequency limit, improving the existing fidelity limit by approximately $40\%$.
We also evaluate common gate protocols near F\"orster resonances and find that retaining the exchange dynamics increases predicted fidelities by up to two orders of magnitude over earlier treatments.
\end{abstract}

\maketitle
\textit{Introduction} -- Neutral-atom quantum processors based on Rydberg interactions have matured rapidly, with two-qubit gate fidelities now exceeding fault-tolerance
thresholds \cite{Evered2023, Peper2025, Muniz2025, Tsai2025, Radnaev2025}.
A key challenge in pushing fidelity further is understanding and controlling the two fundamental error channels: decay from Rydberg states during the gate, and coherent leakage into off-target states in the Rydberg manifold due to finite blockade \cite{saffman2005a, Doultsinos2025, Jandura2022, Jandura2023, Pagano2022, Mohan2023, Cole2026, Petrosyan2017}.
To model these error sources, most analyses adopt a single-channel interaction model in which an effective state $\ket{rr}$ describes and condenses all Rydberg interactions \cite{Walker2008, Jandura2022, Jaksch2000, XFShi2018}. Real Rydberg systems, however, support multiple interaction channels that qualitatively change the control landscape.

In this letter, we develop a systematic treatment of \emph{two-state} pair physics and translate the resulting mechanism into concrete pulse sequences for entangling gates.
Two-state models are of particular relevance, as resonant dipole-dipole interactions between Rydberg atoms (so-called F\"orster resonances) result in symmetric, bifurcated interaction channels. 
F\"orster resonances are popular state choices in quantum processors as they increase gate fidelities at longer interaction range due to  $1/r^3$ scaling, compared to the $1/r^6$ Van der Waals scaling of induced dipole interactions \cite{Ravets2014, Ryabtsev2010, Nipper2012, Ravets2015, Beterov2015, Beterov2016b, Browaeys2016, Anand2024, Giudici2025, Palm2026}.
Whereas F\"orster resonances increase atom-atom interaction magnitude $V$, we compare such resonances to single-channel models with the same effective energy, thereby isolating the effects of the increased dimensionality alone.

In F\"orster resonances, the two equally spaced eigenstates combine under \rsub{a single} optical drive into a laser-\emph{bright} and a laser-\emph{dark} state \cite{Petrosyan2017,Mostaan2026, Keating2015}.
The dark state is not directly driven but is coupled to the bright state via the pair interaction. This bright / dark interaction gives rise to a cancellation mechanism that can strongly suppress coherent gate errors. We investigate the dynamics between the bright and dark states, and demonstrate the importance of including this interaction  when evaluating gate performance for F\"orster resonant states.

\textit{Atomic Models} --
We consider a pair of atoms -- $A$ and $B$ -- in Rydberg states $\ket{a}$, $\ket{b}$, respectively, separated by $\textbf{r}=r\hat{n}$ (as in Fig. \ref{fig:energy-levels}(a)).
The atoms interact via a dipole-dipole interaction, summed over many Rydberg states $\ket{k},\ket{l},\ket{m},\ket{n}$,
\begin{multline}
    H_{\rm dd}(\textbf{r}) =\frac{1}{4\pi\epsilon_0r^3}\sum_{k,l,m,n}\ket{k}_{AA}\hspace{-.07cm}\bra{l}\otimes \ket{m}_{BB}\hspace{-.07cm}\bra{n} \times \\ \left[\textbf{d}_{kl,A}\cdot \textbf{d}_{mn,B}-3\left( \textbf{d}_{kl,A}\cdot \hat{n}\right)\left(\textbf{d}_{mn,B}\cdot \hat{n}\right)\right],
    \label{eq:Hdd}
\end{multline}
for dipole matrix elements $\textbf{d}_{ij,X}=\bra{i}\textbf{d}\ket{j}_X$.
In general, the eigenstates $\ket{e_i}$ of $H_{\rm dd}$ (with energy eigenvalues $V_i$) are admixtures of single-atom Rydberg states, explicitly $\ket{e_i} =\sum_{p,q}c_{pq}\ket{r_p, r_q}$. We define the overlap to the target Rydberg-Rydberg pair state $o_i\equiv |\langle e_i | ab\rangle|^2$.

\begin{figure*}
    \centering
    \includegraphics[width=\linewidth]{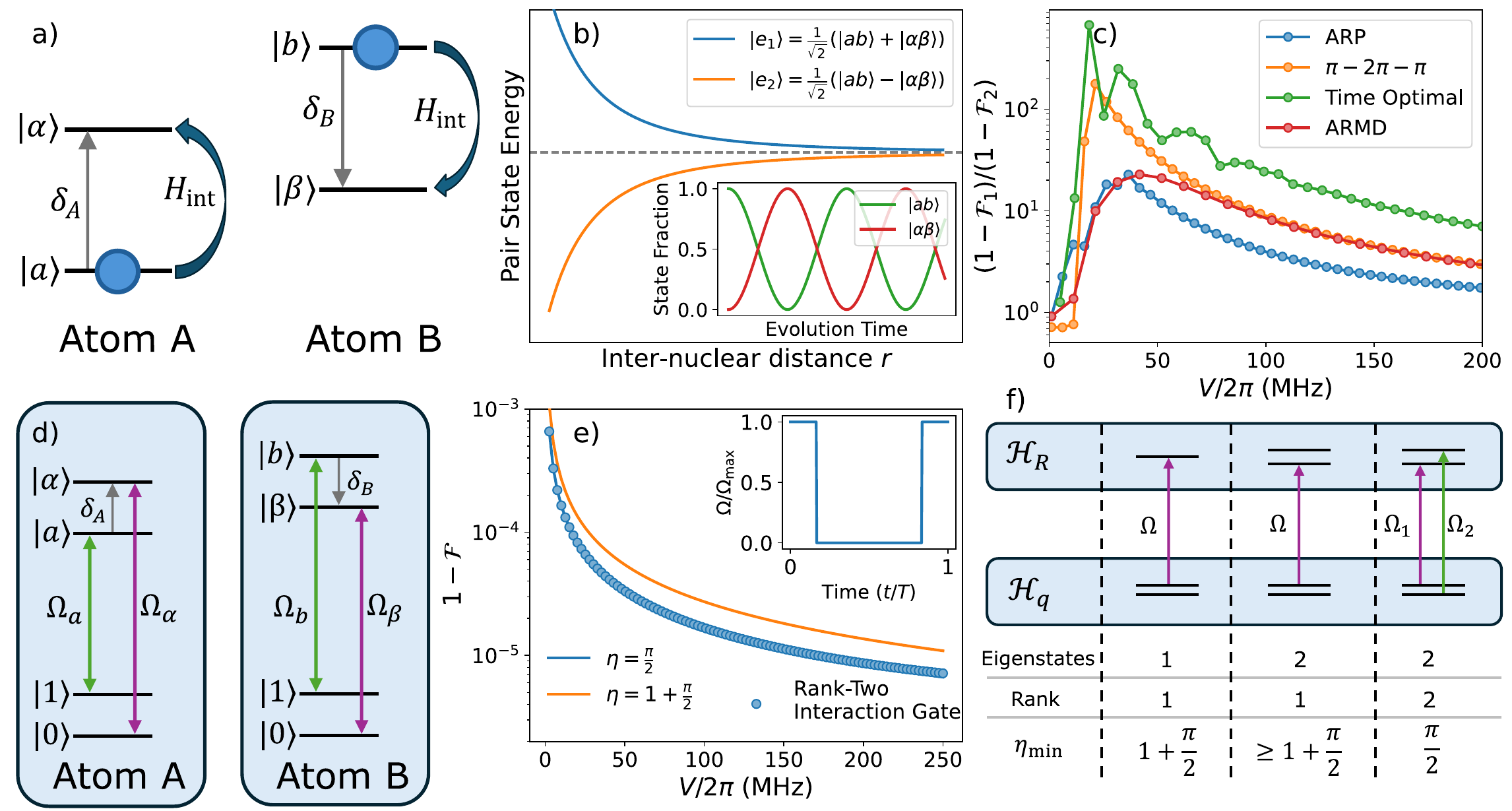}
    \caption{ 
    (a) Atomic structure of a F\"orster resonance. When $\delta_A=-\delta_B$, the resonant exchange process $\ket{a}\leftrightarrow\ket{\alpha}$ and $\ket{b}\leftrightarrow\ket{\beta}$ occurs.
    (b) A F\"orster resonance shown in the pair basis with two interaction eigenstates $\ket{e_1}$ and $\ket{e_2}$. In the inset, coherent oscillations between $\ket{ab}$ and $\ket{\alpha \beta}$ are shown. 
    (c) Gate fidelity simulations under the one- and two-eigenstate models of a F\"orster resonance. Fidelity calculations differ by up to two orders of magnitude when using atomic models that include the F\"orster exchange interaction. The subscripts in the axis label describe the number of eigenstates used to evaluate gate fidelity. \rsub{Four separate gate protocols are analyzed.}
    (d) Depiction of a rank-two coupling. Here, each atom is coupled to two Rydberg states, which then evolve under $H_{\rm int}$, producing entanglement.
    (e) A rank-two interaction gate saturates the bound of $\eta=\frac{\pi}{2}$.
    \rsub{(f) Summary of the bound coefficients $\eta$ for rank-one and rank-two drives under the one- and two-eigenstate models.}}
    \label{fig:energy-levels}
\end{figure*}

In standard treatments, all interaction channels $i$ are condensed into a single effective state $\ket{rr}$, which is shifted by an amount  $V\equiv \left(\sum_i\frac{o_i}{V_i^2}\right)^{-1/2}$ \cite{Walker2008}.
We refer to such a model as the ``one-eigenstate'' model.
In the limit of a perfect F\"orster resonance, $H_{\rm dd}$ has exactly two eigenstates, $\ket{e_1}$ and $\ket{e_2}$, shifted symmetrically by energies $V_1=-V_2$ from the non-interacting target state energy. We refer to this model as the ``two-eigenstate'' model. Of note, the one-eigenstate model applied to a F\"orster resonance condenses the two channels to $\ket{rr}$ such that $|V_1|=|V_2|=V$, but as we show below, this operation does not preserve state dynamics crucial to gate fidelity calculations. 

A F\"orster resonance requires two additional nearby states $\ket{\alpha}_A,\ket{\beta}_B$ with energy spacings $\delta_A=E_\alpha-E_a$ and $\delta_B=E_\beta-E_b$  satisfying $\delta_A+\delta_B=0$. In the resonance, a virtual photon process causes coherent population exchange $\ket{ab}\leftrightarrow\ket{\alpha\beta}$. Figure \ref{fig:energy-levels}\rsub{(a-b)} shows a F\"orster resonance in both the single-atom and pair-state frames, along with the population transfer between $\ket{ab}$ and $\ket{\alpha \beta}$. The remaining panels summarize our core findings. Here, and in all subsequent figures, $\tau_R=150 \us$, corresponding to $n\approx70$ for rubidium or cesium $ns$  states at $300$ K.

For atom $X\in\{A, B\}$, consider the distinct Hilbert spaces $\mathcal{H}_q^X$ and $\mathcal{H}_R^X$, describing the qubit and Rydberg manifolds respectively. We define the \textit{rank} of any pulse sequence as the maximum of $\{\textsf{rank}(H_{\rm drive}^A),\textsf{rank}(H_{\rm drive}^B)\}$, where $H_{\rm drive}^X : \mathcal{H}_q^X\mapsto \mathcal{H}_R^X$.
Typically, neutral-atom Rydberg gates use rank-one pulses. Rank-two pulses have been proposed, enabling fast or long-range gates \cite{Bergonzoni2025, Ildefonso2025,Giudici2025}.

Within the Rydberg manifold of a F\"orster resonance, the non-zero elements of $H_{\rm dd}$ require both atoms to change states, as $H_{\rm dd}$ only contains products of single-atom operators. This results in the interaction Hamiltonian $(\hbar=1)$, in the basis $\{\ket{ab},\ket{\alpha\beta},\ket{a\beta},\ket{\alpha b}\}$ of
\begin{equation}
    H_{\rm int} = \begin{pmatrix}
        0 & V & 0 & 0 \\
        V^* & 0 & 0 & 0 \\
        0 & 0 & \delta_B & W \\
        0 & 0 & W^* & \delta_A
    \end{pmatrix},
\end{equation}
where $V$ and $W$ are calculated per Eq.~(\ref{eq:Hdd}). For atoms aligned along a shared quantization axis, $V=W$. We assume this geometry henceforth. In the rotating frame, and by making the rotating wave approximation (RWA), the Hamiltonian reduces to
\begin{equation}
    H_{\rm int} = V\left( \ket{ab}\bra{\alpha\beta} + \ket{\alpha\beta}\bra{ab}\right).
    \label{eq:Hint}
\end{equation}
The validity of RWA mandates $V/|\delta_X| \ll 1$. For Rydberg level spacings of order GHz and long-range F\"orster resonances of order MHz, this condition is easily satisfied. \rsub{Nearly perfect F\"orster resonances realizing $H_{\rm int}$ as in Eq. (3) appear naturally in real atomic systems, an example of which we present in the Supplemental \cite{Norrell2026SM}.}

\textit{Existing Fidelity Bound} -- Following existing gate fidelity analysis frameworks \cite{Doultsinos2025b,Wesenberg2007}, we define for any two-atom state $\ket{\psi}$ the Rydberg population $P_r = \bra{\psi}\hat{\Pi}_r\ket{\psi}$, where $\hat{\Pi}_r$ is the projector onto all configurations in which at least one atom occupies a Rydberg state.
For a gate of duration $T\ll \tau_R=1/\Gamma$, the decay error is
\begin{equation}
    \varepsilon_{\rm decay}=\Gamma\int_0^T dt\, P_r(t)=\Gamma T_R,
\end{equation}
where $T_R$ is the integrated Rydberg population time. Minimizing gate errors due to finite Rydberg lifetime is equivalent to minimization of $T_R$.

The minimum integrated Rydberg time to generate complete entanglement is measured by the entropy $S=-\log_2(c_1^2)$, where $c_1$ is the largest Schmidt coefficient of the atom pair.
It can be shown that \cite{Doultsinos2025b}
\begin{equation}
  T_R \geq \int_0^1 ds\,  G(s), \quad
  G(s) = \min_{\psi:\,S(\psi)=s}
         \frac{P_r(\psi)}{|\dot{S}(\psi)|},
  \label{eq:eta}
\end{equation}
where $G$ is minimized over all two-atom states with minimum entropy $s$. The coefficient
\begin{equation}
    \eta \equiv V\int_0^1 ds\, G(s)
\end{equation}
is therefore the figure-of-merit: gates that prepare a maximally entangled state satisfy $1-\mathcal{F}\geq \frac{\eta}{V\tau_R}$. It \rsub{has been proven} that for  \rsub{one-eigenstate systems with rank-one excitation}
\begin{equation}
    \eta \geq 1 + \frac{\pi}{2},
\end{equation}
and this bound is tight \cite{Doultsinos2025b}.
\rsub{We show in the Supplemental that this same bound holds for two-eigenstate models with rank-one access \cite{Norrell2026SM}.}

\textit{Rank-Two Gate Fidelity Bound} --
\begin{figure}
    \centering
    \includegraphics[width=\linewidth]{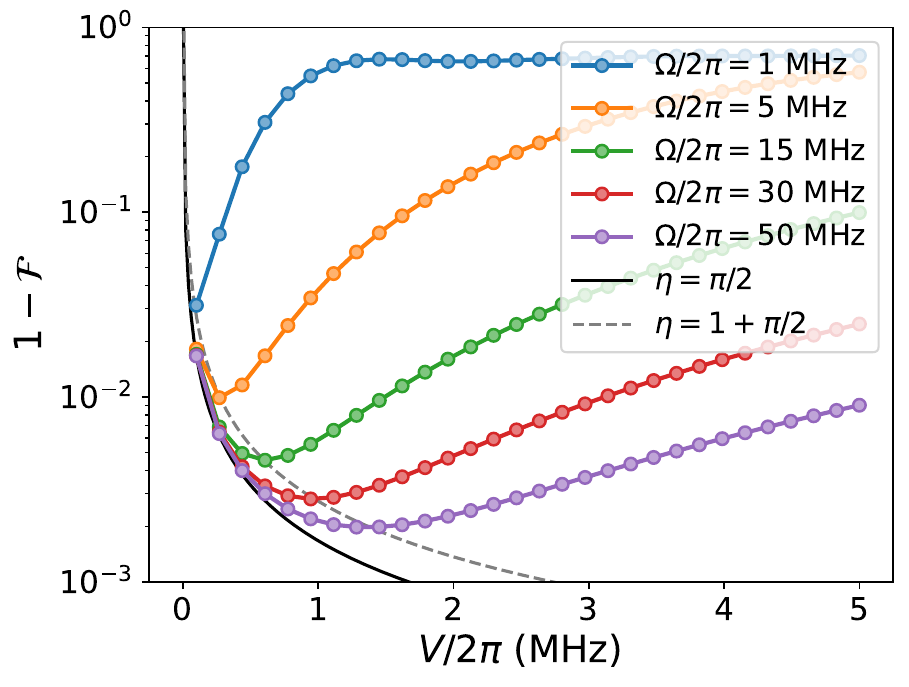}
    \caption{Gate performance of rank-two interaction gate at finite Rabi frequency, breaking the rank-one $\eta=1+\frac{\pi}{2}$ bound. Pulses saturate the bound $\eta=\frac{\pi}{2}$ identically for $\Omega\rightarrow\infty$.
    }
    \label{fig:rank-2-gate}
\end{figure}
We consider now a gate protocol which directly addresses each of the single-atom states of a F\"orster resonance, $\ket{a},\ket{b},\ket{\alpha}, \ket{\beta}$. In this model, the fundamental gate limit is set by
\begin{equation}
  \eta \geq \frac{\pi}{2}.
  \label{eq:etabound}
\end{equation}
The full proof of this bound is presented in the Supplemental Material \cite{Norrell2026SM}. This value is approximately 40\% lower than the value for \rsub{rank-one gates, and directly attributes the increase in fidelity to the increase of the rank of the excitation pulse.}

To saturate the equality of $\tfrac{\pi}{2}$ bound, we consider a $\pi-\rm{gap}-\pi$ sequence, with rank-two $\pi$ pulses of Rabi frequency $\Omega$ on atoms $A$ and $B$ (shown in Fig.~\ref{fig:energy-levels}(\rsub{d})):
\begin{align*}
\text{Drive on $A$: } & \ket{0}_A \leftrightarrow \ket{\alpha}_A,\ 
\ket{1}_A \leftrightarrow \ket{a}_A\\
\text{Drive on $B$: } &\ket{0}_B \leftrightarrow \ket{\rsub{\beta}}_B, \
\ket{1}_B \leftrightarrow \ket{\rsub{b}}_B.
\end{align*}
This gate is analogous to other  Rydberg interaction gate protocols \cite{Jaksch2000, YChew2022, Petrosyan2017}, but extended to rank-two drive. All subsequent analysis is performed in the rotating frame, so Eq.~(\ref{eq:Hint}) is the relevant interaction. We address the three steps of the pulse individually.

\textbf{(i) Excitation ($\pi$ Pulse 1)} The Rydberg population for each atom is given by
\begin{equation}
    p_r(t)=\frac{\Omega^2}{\Omega^2 + V^2}\sin^2\left(\frac{\sqrt{\Omega^2 + V^2}}{2}t\right).
\end{equation}
The off-resonant cross-terms, $\ket{0}\to\ket{a}$ for instance, are forbidden by selection rules, as $\ket{a}$ and $\ket{\alpha}$ are assumed to have different parities  $(-1)^\ell$ with $\ell$ the orbital angular momentum. The total Rydberg population during the excitation pulse is
\begin{equation}
    P_r(t) = p_r^A(t)+p_r^B(t)=2p_r(t).
\end{equation}
Setting $t_\pi=\pi/\Omega$,
\begin{equation}
    \eta_{\rm excitation} = V\int_0^{t_\pi}dt\, P_r(t) = \pi\frac{V}{\Omega}+\mathcal{O}\left[\left(\frac{V}{\Omega} \right)^3\right].
\end{equation}
There is an additional error term, namely the incomplete population transfer, which scales as $\left(V/\Omega\right)^2$. Assuming $\Omega\gg V$, this error is sub-leading and the following map is realized:
\begin{align*}
    &\ket{00}\to -\ket{\rsub{\alpha \beta}}, \quad \ket{01}\to -\ket{\rsub{\alpha b}},\\ 
    &\ket{10}\to -\ket{\rsub{a\beta}}, \quad \ket{11}\to -\ket{\rsub{a b}}.
\end{align*}

\textbf{(ii) Interaction (``gap'')}
The states $\ket{a\beta}$ and $\ket{\alpha b}$ are eigenstates of $H_{\rm int}$ and are thus stationary in the rotating frame. The states $\ket{ab}$ and $\ket{\alpha\beta}$ oscillate under the F\"orster coupling. Explicitly, for $t_{\rm gap}=\pi/(4V)$:
\begin{align*}
    -\ket{ab}&\to \tfrac{1}{\sqrt{2}}\left(-\ket{ab}+i\ket{\alpha\beta}\right),\\ 
    -\ket{\alpha\beta}&\to \tfrac{1}{\sqrt{2}}\left(-\ket{\alpha\beta}+i\ket{ab}\right),\\
    -\ket{a\beta}&\to -\ket{a\beta}, \\
    -\ket{\alpha b}&\to -\ket{\alpha b}.
\end{align*}
The total Rydberg population is $P_r(t) = 2$ for all four inputs throughout the gap, so
\begin{equation}
    \eta_{\rm gap} = V \int_0^{t_{\rm gap}} dt\, P_r(t)
     = V \cdot 2 \cdot \frac{\pi}{4V}
     = \frac{\pi}{2}.
\end{equation}

\textbf{(iii) Return ($\pi$ Pulse 2)}
The system is brought back to the qubit manifold via the same $\pi$ pulse as in the excitation step, so the calculation of $\eta$ is identical. This protocol realizes a $\sqrt{i\sf{SWAP}}^\dagger$ gate, which is maximally entangling.

Summing the three contributions to $\eta$,
\begin{equation}
    \eta=\eta_{\rm excitation} + \eta_{ \rm gap} + \eta_{\rm return}= \frac{\pi}{2} + 2\pi\frac{V}{\Omega}+\mathcal{O}\left[\left(\frac{V}{\Omega} \right)^3\right],
\end{equation}
so that in the limit $\Omega\gg V$, $\eta\to\pi/2$.

\begin{figure*}[!t]
    \centering
    \includegraphics[width=\linewidth]{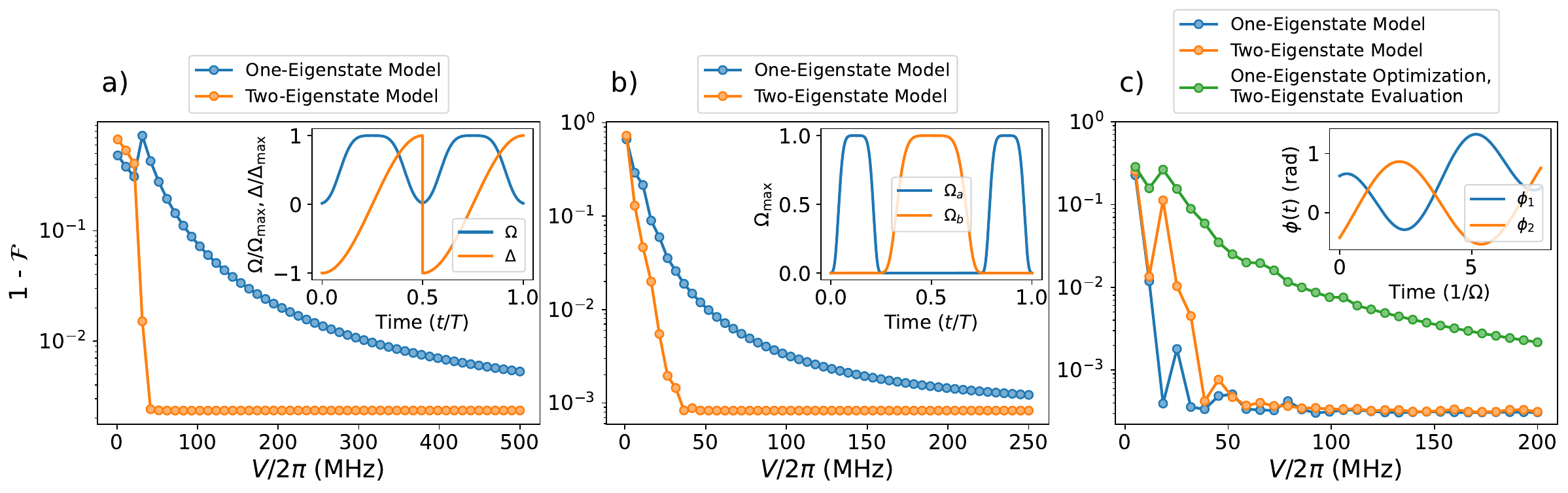}
    \caption{Comparison of gate fidelities evaluated under one- or two-eigenstate models, with pulse sequences shown in the insets. (a) ARP gate. (b) $\pi-2\pi-\pi$ gate (c) TO gate. Gate pulse sequences are optimized and evaluated for one- and two-eigenstate models. The pulses optimized under the one-eigenstate model are also evaluated under the two-eigenstate model, shown in green. Optimized pulses under each model at $V\approx2\pi\times 100$ MHz are shown in the inset.
    }
    \label{fig:gates_1_vs_2}
\end{figure*}

Concrete examples of the performance of this pulse sequence are shown in Fig.~\ref{fig:rank-2-gate}. We find this gate provides fidelities closely approaching equality in Eq.~(\ref{eq:etabound}), with complete saturation in the limit $\Omega\to\infty$. Finite Rabi frequency causes incomplete Rydberg transfer, leading to gate infidelity above the bound. This error channel can be mitigated by pulse sequences robust to off-resonant drive, resulting in higher gate performance at lower Rabi frequency.
We emphasize, though, that coupling to multiple Rydberg states adds experimental overhead when compared to rank-one architectures. We address this overhead and provide pathways to achieving this in the Supplemental Material \cite{Norrell2026SM}.

\textit{Increased Gate Fidelity with F\"orster Resonances} -- 
Owing to the experimental challenges in realizing rank-two access to the Rydberg manifold, we now consider rank-one implementations, where population is only driven to $\ket{a}$ and $\ket{b}$.

We analyze the following gate protocols: Adiabatic Rapid Passage (ARP) \cite{Saffman2020}, $\pi-2\pi-\pi$ \cite{Jaksch2000}, Time Optimal (TO) \cite{Jandura2022}\rsub{, and Almost Resonant Modulated Driving (ARMD) \cite{fan2024fast}}. We calculate gate fidelities for a F\"orster resonance using both one- and two-eigenstate models at Rabi frequency $\Omega_{\rm max}=2\pi\times10$ MHz, with results presented in Fig. \ref{fig:gates_1_vs_2}. Simulated gate fidelities differ by up to two orders of magnitude between the two atomic models. Imperfect F\"orster resonances, where $o_1\approx o_2$ yet $o_1\neq o_2$ still exhibit exchange dynamics and therefore provide increased gate fidelities over one-eigenstate models. An analysis of imperfect resonances is presented in \cite{Norrell2026SM}.

To explain the increase in gate fidelity in the two-eigenstate model, we consider the $\pi-2\pi-\pi$ gate in the $V>\Omega$ limit.
In the one-eigenstate model, the $2\pi$ pulse off-resonantly drives $\ket{a1}\mapsto\ket{ab}$, due to finite detuning $V$. This produces a residual linear AC-Stark phase on $\ket{a1}$ of order $\Omega/V$, the dominant error source compared to the population leakage of order $\left(\Omega/V\right)^2$.
In the two-eigenstate model, $\ket{ab}$ is not a stationary state; the F\"orster coupling $V$ continuously mixes $\ket{ab}$ and $\ket{\alpha \beta}$. There is therefore identical phase cancellation due to dark-state structure \cite{Petrosyan2017}, and the remaining error contributions are to state amplitudes of order $\left(\Omega/V\right)^2$. Full calculations of the presented scalings are given in  \cite{Norrell2026SM}. 

Optimization of TO gates under one- or two-eigenstate models yields identical gate performance in the blockade regime (Fig.~\ref{fig:gates_1_vs_2}(c)), which we attribute to the ability of the pulse to adapt to different models by changing optimization weights. 
Despite this, gates numerically optimized under the one-eigenstate model perform significantly worse when applied to the two-eigenstate model. This discrepancy emphasizes the importance of including all dominant interaction channels within the Rydberg manifold during gate design and optimization. Simplified one-eigenstate descriptions fail to accurately predict gate performance for Rydberg pair states with multiple interaction channels. Furthermore, due to the presence of local minima in the phase landscape, it is unlikely experimental optimizations of pulse waveforms will converge to maximum fidelity when starting from pulses optimized under an atomic model that does not match the dimensionality of the pair interaction space.

\rsub{We additionally analyze ARMD gates with smooth beam turn-on profiles, whose performance exhibits a similar increase when applied to two-eigenstate models. The form and performance of these gates is elaborated in detail in the Supplemental \cite{Norrell2026SM}.}

\textit{Conclusion} --
We have analyzed Rydberg gate design through the lens of two-eigenstate pair physics of F\"orster resonances. The main results are as follows. 

(i) The exchange interaction between manifolds within F\"orster resonances can strongly attenuate entangling gate errors. This mechanism is not predicted by models typically used in gate analyses. Atomic models that include this interaction are therefore required to provide accurate gate fidelity calculations.  

(ii) The two-eigenstate model presented here provides a fundamental gate limit of $\eta \geq \frac{\pi}{2}$ when allowing for rank-two access to the pair state Hilbert space. We provide exact pulse sequences that saturate this bound.

We note that for rank-one gate pulses, no additional hardware is required to realize enhanced gate fidelities other than addressing levels that are F\"orster resonant. Further, the increased gate fidelity is  most pronounced in the low $V$ regime, thereby improving long-range gates which is a key challenge in realizing error-correction codes for neutral atoms.

\textit{Acknowledgments --}
We acknowledge support from ARO under contract W911NF2410382, the National Science Foundation under Award 2016136 for the
QLCI center Hybrid Quantum Architectures and Networks, and Infleqtion.

The authors thank Trent Graham for useful discussions.
\bibliography{atomic, optics, qc_refs, rydberg, saffman_refs, this_paper}

%apsrev4-2.bst 2019-01-14 (MD) hand-edited version of apsrev4-1.bst
%Control: key (0)
%Control: author (8) initials jnrlst
%Control: editor formatted (1) identically to author
%Control: production of article title (0) allowed
%Control: page (0) single
%Control: year (1) truncated
%Control: production of eprint (0) enabled
\begin{thebibliography}{37}%
\makeatletter
\providecommand \@ifxundefined [1]{%
 \@ifx{#1\undefined}
}%
\providecommand \@ifnum [1]{%
 \ifnum #1\expandafter \@firstoftwo
 \else \expandafter \@secondoftwo
 \fi
}%
\providecommand \@ifx [1]{%
 \ifx #1\expandafter \@firstoftwo
 \else \expandafter \@secondoftwo
 \fi
}%
\providecommand \natexlab [1]{#1}%
\providecommand \enquote  [1]{``#1''}%
\providecommand \bibnamefont  [1]{#1}%
\providecommand \bibfnamefont [1]{#1}%
\providecommand \citenamefont [1]{#1}%
\providecommand \href@noop [0]{\@secondoftwo}%
\providecommand \href [0]{\begingroup \@sanitize@url \@href}%
\providecommand \@href[1]{\@@startlink{#1}\@@href}%
\providecommand \@@href[1]{\endgroup#1\@@endlink}%
\providecommand \@sanitize@url [0]{\catcode `\\12\catcode `\$12\catcode `\&12\catcode `\#12\catcode `\^12\catcode `\_12\catcode `\%12\relax}%
\providecommand \@@startlink[1]{}%
\providecommand \@@endlink[0]{}%
\providecommand \url  [0]{\begingroup\@sanitize@url \@url }%
\providecommand \@url [1]{\endgroup\@href {#1}{\urlprefix }}%
\providecommand \urlprefix  [0]{URL }%
\providecommand \Eprint [0]{\href }%
\providecommand \doibase [0]{https://doi.org/}%
\providecommand \selectlanguage [0]{\@gobble}%
\providecommand \bibinfo  [0]{\@secondoftwo}%
\providecommand \bibfield  [0]{\@secondoftwo}%
\providecommand \translation [1]{[#1]}%
\providecommand \BibitemOpen [0]{}%
\providecommand \bibitemStop [0]{}%
\providecommand \bibitemNoStop [0]{.\EOS\space}%
\providecommand \EOS [0]{\spacefactor3000\relax}%
\providecommand \BibitemShut  [1]{\csname bibitem#1\endcsname}%
\let\auto@bib@innerbib\@empty
%</preamble>
\bibitem [{\citenamefont {Evered}\ \emph {et~al.}(2023)\citenamefont {Evered}, \citenamefont {Bluvstein}, \citenamefont {Kalinowski}, \citenamefont {Ebadi}, \citenamefont {Manovitz}, \citenamefont {Zhou}, \citenamefont {Li}, \citenamefont {Geim}, \citenamefont {Wang}, \citenamefont {Maskara}, \citenamefont {Levine}, \citenamefont {Semeghini}, \citenamefont {Greiner}, \citenamefont {Vuleti\'c},\ and\ \citenamefont {Lukin}}]{Evered2023}%
  \BibitemOpen
  \bibfield  {author} {\bibinfo {author} {\bibfnamefont {S.~J.}\ \bibnamefont {Evered}}, \bibinfo {author} {\bibfnamefont {D.}~\bibnamefont {Bluvstein}}, \bibinfo {author} {\bibfnamefont {M.}~\bibnamefont {Kalinowski}}, \bibinfo {author} {\bibfnamefont {S.}~\bibnamefont {Ebadi}}, \bibinfo {author} {\bibfnamefont {T.}~\bibnamefont {Manovitz}}, \bibinfo {author} {\bibfnamefont {H.}~\bibnamefont {Zhou}}, \bibinfo {author} {\bibfnamefont {S.~H.}\ \bibnamefont {Li}}, \bibinfo {author} {\bibfnamefont {A.~A.}\ \bibnamefont {Geim}}, \bibinfo {author} {\bibfnamefont {T.~T.}\ \bibnamefont {Wang}}, \bibinfo {author} {\bibfnamefont {N.}~\bibnamefont {Maskara}}, \bibinfo {author} {\bibfnamefont {H.}~\bibnamefont {Levine}}, \bibinfo {author} {\bibfnamefont {G.}~\bibnamefont {Semeghini}}, \bibinfo {author} {\bibfnamefont {M.}~\bibnamefont {Greiner}}, \bibinfo {author} {\bibfnamefont {V.}~\bibnamefont {Vuleti\'c}},\ and\ \bibinfo {author} {\bibfnamefont {M.~D.}\ \bibnamefont {Lukin}},\ }\bibfield  {title} {\bibinfo {title}
  {High-fidelity parallel entangling gates on a neutral-atom quantum computer},\ }\href@noop {} {\bibfield  {journal} {\bibinfo  {journal} {Nature}\ }\textbf {\bibinfo {volume} {622}},\ \bibinfo {pages} {268} (\bibinfo {year} {2023})}\BibitemShut {NoStop}%
\bibitem [{\citenamefont {Peper}\ \emph {et~al.}(2025)\citenamefont {Peper}, \citenamefont {Li}, \citenamefont {Knapp}, \citenamefont {Bileska}, \citenamefont {Ma}, \citenamefont {Liu}, \citenamefont {Peng}, \citenamefont {Zhang}, \citenamefont {Horvath}, \citenamefont {Burgers},\ and\ \citenamefont {Thompson}}]{Peper2025}%
  \BibitemOpen
  \bibfield  {author} {\bibinfo {author} {\bibfnamefont {M.}~\bibnamefont {Peper}}, \bibinfo {author} {\bibfnamefont {Y.}~\bibnamefont {Li}}, \bibinfo {author} {\bibfnamefont {D.~Y.}\ \bibnamefont {Knapp}}, \bibinfo {author} {\bibfnamefont {M.}~\bibnamefont {Bileska}}, \bibinfo {author} {\bibfnamefont {S.}~\bibnamefont {Ma}}, \bibinfo {author} {\bibfnamefont {G.}~\bibnamefont {Liu}}, \bibinfo {author} {\bibfnamefont {P.}~\bibnamefont {Peng}}, \bibinfo {author} {\bibfnamefont {B.}~\bibnamefont {Zhang}}, \bibinfo {author} {\bibfnamefont {S.~P.}\ \bibnamefont {Horvath}}, \bibinfo {author} {\bibfnamefont {A.~P.}\ \bibnamefont {Burgers}},\ and\ \bibinfo {author} {\bibfnamefont {J.~D.}\ \bibnamefont {Thompson}},\ }\bibfield  {title} {\bibinfo {title} {Spectroscopy and modeling of $^{171}\mathrm{Yb}$ {R}ydberg states for high-fidelity two-qubit gates},\ }\href@noop {} {\bibfield  {journal} {\bibinfo  {journal} {Phys. Rev. X}\ }\textbf {\bibinfo {volume} {15}},\ \bibinfo {pages} {011009} (\bibinfo {year}
  {2025})}\BibitemShut {NoStop}%
\bibitem [{\citenamefont {Muniz}\ \emph {et~al.}(2025)\citenamefont {Muniz}, \citenamefont {Stone}, \citenamefont {Stack}, \citenamefont {Jaffe}, \citenamefont {Kindem}, \citenamefont {Wadleigh}, \citenamefont {Zalys-Geller}, \citenamefont {Zhang}, \citenamefont {Chen}, \citenamefont {Norcia}, \citenamefont {Epstein}, \citenamefont {Halperin}, \citenamefont {Hummel}, \citenamefont {Wilkason}, \citenamefont {Li}, \citenamefont {Barnes}, \citenamefont {Battaglino}, \citenamefont {Bohdanowicz}, \citenamefont {Booth}, \citenamefont {Brown}, \citenamefont {Brown}, \citenamefont {Cairncross}, \citenamefont {Cassella}, \citenamefont {Coxe}, \citenamefont {Crow}, \citenamefont {Feldkamp}, \citenamefont {Griger}, \citenamefont {Heinz}, \citenamefont {Jones}, \citenamefont {Kim}, \citenamefont {King}, \citenamefont {Kotru}, \citenamefont {Lauigan}, \citenamefont {Marjanovic}, \citenamefont {Megidish}, \citenamefont {Meredith}, \citenamefont {McDonald}, \citenamefont {Morshead}, \citenamefont {Narayanaswami},
  \citenamefont {Nishiguchi}, \citenamefont {Paule}, \citenamefont {Pawlak}, \citenamefont {Pudenz}, \citenamefont {P\'erez}, \citenamefont {Ryou}, \citenamefont {Simon}, \citenamefont {Smull}, \citenamefont {Urbanek}, \citenamefont {van~de Veerdonk}, \citenamefont {Vendeiro}, \citenamefont {Wu}, \citenamefont {Xie},\ and\ \citenamefont {Bloom}}]{Muniz2025}%
  \BibitemOpen
  \bibfield  {author} {\bibinfo {author} {\bibfnamefont {J.~A.}\ \bibnamefont {Muniz}}, \bibinfo {author} {\bibfnamefont {M.}~\bibnamefont {Stone}}, \bibinfo {author} {\bibfnamefont {D.~T.}\ \bibnamefont {Stack}}, \bibinfo {author} {\bibfnamefont {M.}~\bibnamefont {Jaffe}}, \bibinfo {author} {\bibfnamefont {J.~M.}\ \bibnamefont {Kindem}}, \bibinfo {author} {\bibfnamefont {L.}~\bibnamefont {Wadleigh}}, \bibinfo {author} {\bibfnamefont {E.}~\bibnamefont {Zalys-Geller}}, \bibinfo {author} {\bibfnamefont {X.}~\bibnamefont {Zhang}}, \bibinfo {author} {\bibfnamefont {C.-A.}\ \bibnamefont {Chen}}, \bibinfo {author} {\bibfnamefont {M.~A.}\ \bibnamefont {Norcia}}, \bibinfo {author} {\bibfnamefont {J.}~\bibnamefont {Epstein}}, \bibinfo {author} {\bibfnamefont {E.}~\bibnamefont {Halperin}}, \bibinfo {author} {\bibfnamefont {F.}~\bibnamefont {Hummel}}, \bibinfo {author} {\bibfnamefont {T.}~\bibnamefont {Wilkason}}, \bibinfo {author} {\bibfnamefont {M.}~\bibnamefont {Li}}, \bibinfo {author} {\bibfnamefont
  {K.}~\bibnamefont {Barnes}}, \bibinfo {author} {\bibfnamefont {P.}~\bibnamefont {Battaglino}}, \bibinfo {author} {\bibfnamefont {T.~C.}\ \bibnamefont {Bohdanowicz}}, \bibinfo {author} {\bibfnamefont {G.}~\bibnamefont {Booth}}, \bibinfo {author} {\bibfnamefont {A.}~\bibnamefont {Brown}}, \bibinfo {author} {\bibfnamefont {M.~O.}\ \bibnamefont {Brown}}, \bibinfo {author} {\bibfnamefont {W.~B.}\ \bibnamefont {Cairncross}}, \bibinfo {author} {\bibfnamefont {K.}~\bibnamefont {Cassella}}, \bibinfo {author} {\bibfnamefont {R.}~\bibnamefont {Coxe}}, \bibinfo {author} {\bibfnamefont {D.}~\bibnamefont {Crow}}, \bibinfo {author} {\bibfnamefont {M.}~\bibnamefont {Feldkamp}}, \bibinfo {author} {\bibfnamefont {C.}~\bibnamefont {Griger}}, \bibinfo {author} {\bibfnamefont {A.}~\bibnamefont {Heinz}}, \bibinfo {author} {\bibfnamefont {A.~M.~W.}\ \bibnamefont {Jones}}, \bibinfo {author} {\bibfnamefont {H.}~\bibnamefont {Kim}}, \bibinfo {author} {\bibfnamefont {J.}~\bibnamefont {King}}, \bibinfo {author} {\bibfnamefont
  {K.}~\bibnamefont {Kotru}}, \bibinfo {author} {\bibfnamefont {J.}~\bibnamefont {Lauigan}}, \bibinfo {author} {\bibfnamefont {J.}~\bibnamefont {Marjanovic}}, \bibinfo {author} {\bibfnamefont {E.}~\bibnamefont {Megidish}}, \bibinfo {author} {\bibfnamefont {M.}~\bibnamefont {Meredith}}, \bibinfo {author} {\bibfnamefont {M.}~\bibnamefont {McDonald}}, \bibinfo {author} {\bibfnamefont {R.}~\bibnamefont {Morshead}}, \bibinfo {author} {\bibfnamefont {S.}~\bibnamefont {Narayanaswami}}, \bibinfo {author} {\bibfnamefont {C.}~\bibnamefont {Nishiguchi}}, \bibinfo {author} {\bibfnamefont {T.}~\bibnamefont {Paule}}, \bibinfo {author} {\bibfnamefont {K.~A.}\ \bibnamefont {Pawlak}}, \bibinfo {author} {\bibfnamefont {K.~L.}\ \bibnamefont {Pudenz}}, \bibinfo {author} {\bibfnamefont {D.~R.}\ \bibnamefont {P\'erez}}, \bibinfo {author} {\bibfnamefont {A.}~\bibnamefont {Ryou}}, \bibinfo {author} {\bibfnamefont {J.}~\bibnamefont {Simon}}, \bibinfo {author} {\bibfnamefont {A.}~\bibnamefont {Smull}}, \bibinfo {author} {\bibfnamefont
  {M.}~\bibnamefont {Urbanek}}, \bibinfo {author} {\bibfnamefont {R.~J.~M.}\ \bibnamefont {van~de Veerdonk}}, \bibinfo {author} {\bibfnamefont {Z.}~\bibnamefont {Vendeiro}}, \bibinfo {author} {\bibfnamefont {T.-Y.}\ \bibnamefont {Wu}}, \bibinfo {author} {\bibfnamefont {X.}~\bibnamefont {Xie}},\ and\ \bibinfo {author} {\bibfnamefont {B.~J.}\ \bibnamefont {Bloom}},\ }\bibfield  {title} {\bibinfo {title} {High-fidelity universal gates in the ${}^{171}$$\mathrm{Yb}$ ground-state nuclear-spin qubit},\ }\href@noop {} {\bibfield  {journal} {\bibinfo  {journal} {PRX Quantum}\ }\textbf {\bibinfo {volume} {6}},\ \bibinfo {pages} {020334} (\bibinfo {year} {2025})}\BibitemShut {NoStop}%
\bibitem [{\citenamefont {Tsai}\ \emph {et~al.}(2025)\citenamefont {Tsai}, \citenamefont {Sun}, \citenamefont {Shaw}, \citenamefont {Finkelstein},\ and\ \citenamefont {Endres}}]{Tsai2025}%
  \BibitemOpen
  \bibfield  {author} {\bibinfo {author} {\bibfnamefont {R.~B.-S.}\ \bibnamefont {Tsai}}, \bibinfo {author} {\bibfnamefont {X.}~\bibnamefont {Sun}}, \bibinfo {author} {\bibfnamefont {A.~L.}\ \bibnamefont {Shaw}}, \bibinfo {author} {\bibfnamefont {R.}~\bibnamefont {Finkelstein}},\ and\ \bibinfo {author} {\bibfnamefont {M.}~\bibnamefont {Endres}},\ }\bibfield  {title} {\bibinfo {title} {Benchmarking and fidelity response theory of high-fidelity {R}ydberg entangling gates},\ }\href@noop {} {\bibfield  {journal} {\bibinfo  {journal} {PRX Quantum}\ }\textbf {\bibinfo {volume} {6}},\ \bibinfo {pages} {010331} (\bibinfo {year} {2025})}\BibitemShut {NoStop}%
\bibitem [{\citenamefont {Radnaev}\ \emph {et~al.}(2025)\citenamefont {Radnaev}, \citenamefont {Chung}, \citenamefont {Cole}, \citenamefont {Mason}, \citenamefont {Ballance}, \citenamefont {Bedalov}, \citenamefont {Belknap}, \citenamefont {Berman}, \citenamefont {Blakely}, \citenamefont {Bloomfield}, \citenamefont {Buttler}, \citenamefont {Campbell}, \citenamefont {Chopinaud}, \citenamefont {Copenhaver}, \citenamefont {Dawes}, \citenamefont {Eubanks}, \citenamefont {Friss}, \citenamefont {Garcia}, \citenamefont {Gilbert}, \citenamefont {Gillette}, \citenamefont {Goiporia}, \citenamefont {Gokhale}, \citenamefont {Goldwin}, \citenamefont {Goodwin}, \citenamefont {Graham}, \citenamefont {Guttormsson}, \citenamefont {Hickman}, \citenamefont {Hurtley}, \citenamefont {Iliev}, \citenamefont {Jones}, \citenamefont {Jones}, \citenamefont {Kuper}, \citenamefont {Lewis}, \citenamefont {Lichtman}, \citenamefont {Majdeteimouri}, \citenamefont {Mason}, \citenamefont {McMaster}, \citenamefont {Miles}, \citenamefont
  {Mitchell}, \citenamefont {Murphree}, \citenamefont {Neff-Mallon}, \citenamefont {Oh}, \citenamefont {Omole}, \citenamefont {Simon}, \citenamefont {Pederson}, \citenamefont {Perlin}, \citenamefont {Reiter}, \citenamefont {Rines}, \citenamefont {Romlow}, \citenamefont {Scott}, \citenamefont {Stiefvater}, \citenamefont {Tanner}, \citenamefont {Tucker}, \citenamefont {Vinogradov}, \citenamefont {Warter}, \citenamefont {Yeo}, \citenamefont {Saffman},\ and\ \citenamefont {Noel}}]{Radnaev2025}%
  \BibitemOpen
  \bibfield  {author} {\bibinfo {author} {\bibfnamefont {A.~G.}\ \bibnamefont {Radnaev}}, \bibinfo {author} {\bibfnamefont {W.~C.}\ \bibnamefont {Chung}}, \bibinfo {author} {\bibfnamefont {D.~C.}\ \bibnamefont {Cole}}, \bibinfo {author} {\bibfnamefont {D.}~\bibnamefont {Mason}}, \bibinfo {author} {\bibfnamefont {T.~G.}\ \bibnamefont {Ballance}}, \bibinfo {author} {\bibfnamefont {M.~J.}\ \bibnamefont {Bedalov}}, \bibinfo {author} {\bibfnamefont {D.~A.}\ \bibnamefont {Belknap}}, \bibinfo {author} {\bibfnamefont {M.~R.}\ \bibnamefont {Berman}}, \bibinfo {author} {\bibfnamefont {M.}~\bibnamefont {Blakely}}, \bibinfo {author} {\bibfnamefont {I.~L.}\ \bibnamefont {Bloomfield}}, \bibinfo {author} {\bibfnamefont {P.~D.}\ \bibnamefont {Buttler}}, \bibinfo {author} {\bibfnamefont {C.}~\bibnamefont {Campbell}}, \bibinfo {author} {\bibfnamefont {A.}~\bibnamefont {Chopinaud}}, \bibinfo {author} {\bibfnamefont {E.}~\bibnamefont {Copenhaver}}, \bibinfo {author} {\bibfnamefont {M.~K.}\ \bibnamefont {Dawes}}, \bibinfo
  {author} {\bibfnamefont {S.~Y.}\ \bibnamefont {Eubanks}}, \bibinfo {author} {\bibfnamefont {A.~J.}\ \bibnamefont {Friss}}, \bibinfo {author} {\bibfnamefont {D.~M.}\ \bibnamefont {Garcia}}, \bibinfo {author} {\bibfnamefont {J.}~\bibnamefont {Gilbert}}, \bibinfo {author} {\bibfnamefont {M.}~\bibnamefont {Gillette}}, \bibinfo {author} {\bibfnamefont {P.}~\bibnamefont {Goiporia}}, \bibinfo {author} {\bibfnamefont {P.}~\bibnamefont {Gokhale}}, \bibinfo {author} {\bibfnamefont {J.}~\bibnamefont {Goldwin}}, \bibinfo {author} {\bibfnamefont {D.}~\bibnamefont {Goodwin}}, \bibinfo {author} {\bibfnamefont {T.~M.}\ \bibnamefont {Graham}}, \bibinfo {author} {\bibfnamefont {C.}~\bibnamefont {Guttormsson}}, \bibinfo {author} {\bibfnamefont {G.~T.}\ \bibnamefont {Hickman}}, \bibinfo {author} {\bibfnamefont {L.}~\bibnamefont {Hurtley}}, \bibinfo {author} {\bibfnamefont {M.}~\bibnamefont {Iliev}}, \bibinfo {author} {\bibfnamefont {E.~B.}\ \bibnamefont {Jones}}, \bibinfo {author} {\bibfnamefont {R.~A.}\ \bibnamefont {Jones}},
  \bibinfo {author} {\bibfnamefont {K.~W.}\ \bibnamefont {Kuper}}, \bibinfo {author} {\bibfnamefont {T.~B.}\ \bibnamefont {Lewis}}, \bibinfo {author} {\bibfnamefont {M.~T.}\ \bibnamefont {Lichtman}}, \bibinfo {author} {\bibfnamefont {F.}~\bibnamefont {Majdeteimouri}}, \bibinfo {author} {\bibfnamefont {J.~J.}\ \bibnamefont {Mason}}, \bibinfo {author} {\bibfnamefont {J.~K.}\ \bibnamefont {McMaster}}, \bibinfo {author} {\bibfnamefont {J.~A.}\ \bibnamefont {Miles}}, \bibinfo {author} {\bibfnamefont {P.~T.}\ \bibnamefont {Mitchell}}, \bibinfo {author} {\bibfnamefont {J.~D.}\ \bibnamefont {Murphree}}, \bibinfo {author} {\bibfnamefont {N.~A.}\ \bibnamefont {Neff-Mallon}}, \bibinfo {author} {\bibfnamefont {T.}~\bibnamefont {Oh}}, \bibinfo {author} {\bibfnamefont {V.}~\bibnamefont {Omole}}, \bibinfo {author} {\bibfnamefont {C.~P.}\ \bibnamefont {Simon}}, \bibinfo {author} {\bibfnamefont {N.}~\bibnamefont {Pederson}}, \bibinfo {author} {\bibfnamefont {M.~A.}\ \bibnamefont {Perlin}}, \bibinfo {author} {\bibfnamefont
  {A.}~\bibnamefont {Reiter}}, \bibinfo {author} {\bibfnamefont {R.}~\bibnamefont {Rines}}, \bibinfo {author} {\bibfnamefont {P.}~\bibnamefont {Romlow}}, \bibinfo {author} {\bibfnamefont {A.~M.}\ \bibnamefont {Scott}}, \bibinfo {author} {\bibfnamefont {D.}~\bibnamefont {Stiefvater}}, \bibinfo {author} {\bibfnamefont {J.~R.}\ \bibnamefont {Tanner}}, \bibinfo {author} {\bibfnamefont {A.~K.}\ \bibnamefont {Tucker}}, \bibinfo {author} {\bibfnamefont {I.~V.}\ \bibnamefont {Vinogradov}}, \bibinfo {author} {\bibfnamefont {M.~L.}\ \bibnamefont {Warter}}, \bibinfo {author} {\bibfnamefont {M.}~\bibnamefont {Yeo}}, \bibinfo {author} {\bibfnamefont {M.}~\bibnamefont {Saffman}},\ and\ \bibinfo {author} {\bibfnamefont {T.~W.}\ \bibnamefont {Noel}},\ }\bibfield  {title} {\bibinfo {title} {Universal neutral-atom quantum computer with individual optical addressing and nondestructive readout},\ }\href@noop {} {\bibfield  {journal} {\bibinfo  {journal} {PRX Quantum}\ }\textbf {\bibinfo {volume} {6}},\ \bibinfo {pages} {030334}
  (\bibinfo {year} {2025})}\BibitemShut {NoStop}%
\bibitem [{\citenamefont {Saffman}\ and\ \citenamefont {Walker}(2005)}]{saffman2005a}%
  \BibitemOpen
  \bibfield  {author} {\bibinfo {author} {\bibfnamefont {M.}~\bibnamefont {Saffman}}\ and\ \bibinfo {author} {\bibfnamefont {T.~G.}\ \bibnamefont {Walker}},\ }\bibfield  {title} {\bibinfo {title} {Analysis of a quantum logic device based on dipole-dipole interactions of optically trapped {R}ydberg atoms},\ }\href@noop {} {\bibfield  {journal} {\bibinfo  {journal} {Phys. Rev. A}\ }\textbf {\bibinfo {volume} {72}},\ \bibinfo {pages} {022347} (\bibinfo {year} {2005})}\BibitemShut {NoStop}%
\bibitem [{\citenamefont {Doultsinos}\ and\ \citenamefont {Petrosyan}(2025)}]{Doultsinos2025}%
  \BibitemOpen
  \bibfield  {author} {\bibinfo {author} {\bibfnamefont {G.}~\bibnamefont {Doultsinos}}\ and\ \bibinfo {author} {\bibfnamefont {D.}~\bibnamefont {Petrosyan}},\ }\bibfield  {title} {\bibinfo {title} {Quantum gates between distant atoms mediated by a {R}ydberg excitation antiferromagnet},\ }\href@noop {} {\bibfield  {journal} {\bibinfo  {journal} {Phys. Rev. Res.}\ }\textbf {\bibinfo {volume} {7}},\ \bibinfo {pages} {023246} (\bibinfo {year} {2025})}\BibitemShut {NoStop}%
\bibitem [{\citenamefont {Jandura}\ and\ \citenamefont {Pupillo}(2022)}]{Jandura2022}%
  \BibitemOpen
  \bibfield  {author} {\bibinfo {author} {\bibfnamefont {S.}~\bibnamefont {Jandura}}\ and\ \bibinfo {author} {\bibfnamefont {G.}~\bibnamefont {Pupillo}},\ }\bibfield  {title} {\bibinfo {title} {Time-optimal two- and three-qubit gates for {R}ydberg atoms},\ }\href@noop {} {\bibfield  {journal} {\bibinfo  {journal} {Quantum}\ }\textbf {\bibinfo {volume} {6}},\ \bibinfo {pages} {712} (\bibinfo {year} {2022})}\BibitemShut {NoStop}%
\bibitem [{\citenamefont {Jandura}\ \emph {et~al.}(2023)\citenamefont {Jandura}, \citenamefont {Thompson},\ and\ \citenamefont {Pupillo}}]{Jandura2023}%
  \BibitemOpen
  \bibfield  {author} {\bibinfo {author} {\bibfnamefont {S.}~\bibnamefont {Jandura}}, \bibinfo {author} {\bibfnamefont {J.~D.}\ \bibnamefont {Thompson}},\ and\ \bibinfo {author} {\bibfnamefont {G.}~\bibnamefont {Pupillo}},\ }\bibfield  {title} {\bibinfo {title} {Optimizing {R}ydberg gates for logical-qubit performance},\ }\href@noop {} {\bibfield  {journal} {\bibinfo  {journal} {PRX Quantum}\ }\textbf {\bibinfo {volume} {4}},\ \bibinfo {pages} {020336} (\bibinfo {year} {2023})}\BibitemShut {NoStop}%
\bibitem [{\citenamefont {Pagano}\ \emph {et~al.}(2022)\citenamefont {Pagano}, \citenamefont {Weber}, \citenamefont {Jaschke}, \citenamefont {Pfau}, \citenamefont {Meinert}, \citenamefont {Montangero},\ and\ \citenamefont {B\"uchler}}]{Pagano2022}%
  \BibitemOpen
  \bibfield  {author} {\bibinfo {author} {\bibfnamefont {A.}~\bibnamefont {Pagano}}, \bibinfo {author} {\bibfnamefont {S.}~\bibnamefont {Weber}}, \bibinfo {author} {\bibfnamefont {D.}~\bibnamefont {Jaschke}}, \bibinfo {author} {\bibfnamefont {T.}~\bibnamefont {Pfau}}, \bibinfo {author} {\bibfnamefont {F.}~\bibnamefont {Meinert}}, \bibinfo {author} {\bibfnamefont {S.}~\bibnamefont {Montangero}},\ and\ \bibinfo {author} {\bibfnamefont {H.~P.}\ \bibnamefont {B\"uchler}},\ }\bibfield  {title} {\bibinfo {title} {Error budgeting for a controlled-phase gate with {S}trontium-88 {R}ydberg atoms},\ }\href@noop {} {\bibfield  {journal} {\bibinfo  {journal} {Phys. Rev. Res.}\ }\textbf {\bibinfo {volume} {4}},\ \bibinfo {pages} {033019} (\bibinfo {year} {2022})}\BibitemShut {NoStop}%
\bibitem [{\citenamefont {Mohan}\ \emph {et~al.}(2023)\citenamefont {Mohan}, \citenamefont {de~Keijzer},\ and\ \citenamefont {Kokkelmans}}]{Mohan2023}%
  \BibitemOpen
  \bibfield  {author} {\bibinfo {author} {\bibfnamefont {M.}~\bibnamefont {Mohan}}, \bibinfo {author} {\bibfnamefont {R.}~\bibnamefont {de~Keijzer}},\ and\ \bibinfo {author} {\bibfnamefont {S.}~\bibnamefont {Kokkelmans}},\ }\bibfield  {title} {\bibinfo {title} {Robust control and optimal {R}ydberg states for neutral atom two-qubit gates},\ }\href@noop {} {\bibfield  {journal} {\bibinfo  {journal} {Phys. Rev. Res.}\ }\textbf {\bibinfo {volume} {5}},\ \bibinfo {pages} {033052} (\bibinfo {year} {2023})}\BibitemShut {NoStop}%
\bibitem [{\citenamefont {Cole}\ \emph {et~al.}(2026)\citenamefont {Cole}, \citenamefont {Buchemmavari},\ and\ \citenamefont {Saffman}}]{Cole2026}%
  \BibitemOpen
  \bibfield  {author} {\bibinfo {author} {\bibfnamefont {D.~C.}\ \bibnamefont {Cole}}, \bibinfo {author} {\bibfnamefont {V.}~\bibnamefont {Buchemmavari}},\ and\ \bibinfo {author} {\bibfnamefont {M.}~\bibnamefont {Saffman}},\ }\bibfield  {title} {\bibinfo {title} {An asymmetric and fast {R}ydberg gate protocol for entanglement outside of the blockade regime},\ }\href@noop {} {\bibfield  {journal} {\bibinfo  {journal} {arXiv:2512.22767}\ } (\bibinfo {year} {2026})}\BibitemShut {NoStop}%
\bibitem [{\citenamefont {Petrosyan}\ \emph {et~al.}(2017)\citenamefont {Petrosyan}, \citenamefont {Motzoi}, \citenamefont {Saffman},\ and\ \citenamefont {M\o{}lmer}}]{Petrosyan2017}%
  \BibitemOpen
  \bibfield  {author} {\bibinfo {author} {\bibfnamefont {D.}~\bibnamefont {Petrosyan}}, \bibinfo {author} {\bibfnamefont {F.}~\bibnamefont {Motzoi}}, \bibinfo {author} {\bibfnamefont {M.}~\bibnamefont {Saffman}},\ and\ \bibinfo {author} {\bibfnamefont {K.}~\bibnamefont {M\o{}lmer}},\ }\bibfield  {title} {\bibinfo {title} {High-fidelity {R}ydberg quantum gate via a two-atom dark state},\ }\href@noop {} {\bibfield  {journal} {\bibinfo  {journal} {Phys. Rev. A}\ }\textbf {\bibinfo {volume} {96}},\ \bibinfo {pages} {042306} (\bibinfo {year} {2017})}\BibitemShut {NoStop}%
\bibitem [{\citenamefont {Walker}\ and\ \citenamefont {Saffman}(2008)}]{Walker2008}%
  \BibitemOpen
  \bibfield  {author} {\bibinfo {author} {\bibfnamefont {T.~G.}\ \bibnamefont {Walker}}\ and\ \bibinfo {author} {\bibfnamefont {M.}~\bibnamefont {Saffman}},\ }\bibfield  {title} {\bibinfo {title} {Consequences of {Z}eeman degeneracy for the van der {W}aals blockade between {R}ydberg atoms},\ }\href@noop {} {\bibfield  {journal} {\bibinfo  {journal} {Phys. Rev. A}\ }\textbf {\bibinfo {volume} {77}},\ \bibinfo {pages} {032723} (\bibinfo {year} {2008})}\BibitemShut {NoStop}%
\bibitem [{\citenamefont {Jaksch}\ \emph {et~al.}(2000)\citenamefont {Jaksch}, \citenamefont {Cirac}, \citenamefont {Zoller}, \citenamefont {Rolston}, \citenamefont {C\^ot\'e},\ and\ \citenamefont {Lukin}}]{Jaksch2000}%
  \BibitemOpen
  \bibfield  {author} {\bibinfo {author} {\bibfnamefont {D.}~\bibnamefont {Jaksch}}, \bibinfo {author} {\bibfnamefont {J.~I.}\ \bibnamefont {Cirac}}, \bibinfo {author} {\bibfnamefont {P.}~\bibnamefont {Zoller}}, \bibinfo {author} {\bibfnamefont {S.~L.}\ \bibnamefont {Rolston}}, \bibinfo {author} {\bibfnamefont {R.}~\bibnamefont {C\^ot\'e}},\ and\ \bibinfo {author} {\bibfnamefont {M.~D.}\ \bibnamefont {Lukin}},\ }\bibfield  {title} {\bibinfo {title} {Fast quantum gates for neutral atoms},\ }\href@noop {} {\bibfield  {journal} {\bibinfo  {journal} {Phys. Rev. Lett.}\ }\textbf {\bibinfo {volume} {85}},\ \bibinfo {pages} {2208} (\bibinfo {year} {2000})}\BibitemShut {NoStop}%
\bibitem [{\citenamefont {Shi}(2018)}]{XFShi2018}%
  \BibitemOpen
  \bibfield  {author} {\bibinfo {author} {\bibfnamefont {X.-F.}\ \bibnamefont {Shi}},\ }\bibfield  {title} {\bibinfo {title} {Deutsch, {T}offoli, and {CNOT} gates via {R}ydberg blockade of neutral atoms},\ }\href@noop {} {\bibfield  {journal} {\bibinfo  {journal} {Phys. Rev. Appl.}\ }\textbf {\bibinfo {volume} {9}},\ \bibinfo {pages} {051001} (\bibinfo {year} {2018})}\BibitemShut {NoStop}%
\bibitem [{\citenamefont {Ravets}\ \emph {et~al.}(2014)\citenamefont {Ravets}, \citenamefont {Labuhn}, \citenamefont {Barredo}, \citenamefont {B\'eguin}, \citenamefont {Lahaye},\ and\ \citenamefont {Browaeys}}]{Ravets2014}%
  \BibitemOpen
  \bibfield  {author} {\bibinfo {author} {\bibfnamefont {S.}~\bibnamefont {Ravets}}, \bibinfo {author} {\bibfnamefont {H.}~\bibnamefont {Labuhn}}, \bibinfo {author} {\bibfnamefont {D.}~\bibnamefont {Barredo}}, \bibinfo {author} {\bibfnamefont {L.}~\bibnamefont {B\'eguin}}, \bibinfo {author} {\bibfnamefont {T.}~\bibnamefont {Lahaye}},\ and\ \bibinfo {author} {\bibfnamefont {A.}~\bibnamefont {Browaeys}},\ }\bibfield  {title} {\bibinfo {title} {Coherent dipole dipole coupling between two single {R}ydberg atoms at an electrically-tuned {F}\"orster resonance},\ }\href@noop {} {\bibfield  {journal} {\bibinfo  {journal} {Nat. Phys.}\ }\textbf {\bibinfo {volume} {10}},\ \bibinfo {pages} {914} (\bibinfo {year} {2014})}\BibitemShut {NoStop}%
\bibitem [{\citenamefont {Ryabtsev}\ \emph {et~al.}(2010)\citenamefont {Ryabtsev}, \citenamefont {Tretyakov}, \citenamefont {Beterov},\ and\ \citenamefont {Entin}}]{Ryabtsev2010}%
  \BibitemOpen
  \bibfield  {author} {\bibinfo {author} {\bibfnamefont {I.~I.}\ \bibnamefont {Ryabtsev}}, \bibinfo {author} {\bibfnamefont {D.~B.}\ \bibnamefont {Tretyakov}}, \bibinfo {author} {\bibfnamefont {I.~I.}\ \bibnamefont {Beterov}},\ and\ \bibinfo {author} {\bibfnamefont {V.~M.}\ \bibnamefont {Entin}},\ }\bibfield  {title} {\bibinfo {title} {Observation of the {S}tark-tuned {F}\"orster resonance between two {R}ydberg atoms},\ }\href@noop {} {\bibfield  {journal} {\bibinfo  {journal} {Phys. Rev. Lett.}\ }\textbf {\bibinfo {volume} {104}},\ \bibinfo {pages} {073003} (\bibinfo {year} {2010})}\BibitemShut {NoStop}%
\bibitem [{\citenamefont {Nipper}\ \emph {et~al.}(2012)\citenamefont {Nipper}, \citenamefont {Balewski}, \citenamefont {Krupp}, \citenamefont {Butscher}, \citenamefont {L\"ow},\ and\ \citenamefont {Pfau}}]{Nipper2012}%
  \BibitemOpen
  \bibfield  {author} {\bibinfo {author} {\bibfnamefont {J.}~\bibnamefont {Nipper}}, \bibinfo {author} {\bibfnamefont {J.~B.}\ \bibnamefont {Balewski}}, \bibinfo {author} {\bibfnamefont {A.~T.}\ \bibnamefont {Krupp}}, \bibinfo {author} {\bibfnamefont {B.}~\bibnamefont {Butscher}}, \bibinfo {author} {\bibfnamefont {R.}~\bibnamefont {L\"ow}},\ and\ \bibinfo {author} {\bibfnamefont {T.}~\bibnamefont {Pfau}},\ }\bibfield  {title} {\bibinfo {title} {Highly resolved measurements of {S}tark-tuned {F}\"orster resonances between {R}ydberg atoms},\ }\href {https://doi.org/10.1103/PhysRevLett.108.113001} {\bibfield  {journal} {\bibinfo  {journal} {Phys. Rev. Lett.}\ }\textbf {\bibinfo {volume} {108}},\ \bibinfo {pages} {113001} (\bibinfo {year} {2012})}\BibitemShut {NoStop}%
\bibitem [{\citenamefont {Ravets}\ \emph {et~al.}(2015)\citenamefont {Ravets}, \citenamefont {Labuhn}, \citenamefont {Barredo}, \citenamefont {Lahaye},\ and\ \citenamefont {Browaeys}}]{Ravets2015}%
  \BibitemOpen
  \bibfield  {author} {\bibinfo {author} {\bibfnamefont {S.}~\bibnamefont {Ravets}}, \bibinfo {author} {\bibfnamefont {H.}~\bibnamefont {Labuhn}}, \bibinfo {author} {\bibfnamefont {D.}~\bibnamefont {Barredo}}, \bibinfo {author} {\bibfnamefont {T.}~\bibnamefont {Lahaye}},\ and\ \bibinfo {author} {\bibfnamefont {A.}~\bibnamefont {Browaeys}},\ }\bibfield  {title} {\bibinfo {title} {Measurement of the angular dependence of the dipole-dipole interaction between two individual {R}ydberg atoms at a {F}\"orster resonance},\ }\href {https://doi.org/10.1103/PhysRevA.92.020701} {\bibfield  {journal} {\bibinfo  {journal} {Phys. Rev. A}\ }\textbf {\bibinfo {volume} {92}},\ \bibinfo {pages} {020701(R)} (\bibinfo {year} {2015})}\BibitemShut {NoStop}%
\bibitem [{\citenamefont {Beterov}\ and\ \citenamefont {Saffman}(2015)}]{Beterov2015}%
  \BibitemOpen
  \bibfield  {author} {\bibinfo {author} {\bibfnamefont {I.~I.}\ \bibnamefont {Beterov}}\ and\ \bibinfo {author} {\bibfnamefont {M.}~\bibnamefont {Saffman}},\ }\bibfield  {title} {\bibinfo {title} {{R}ydberg blockade, {F}\"orster resonances, and quantum state measurements with different atomic species},\ }\href@noop {} {\bibfield  {journal} {\bibinfo  {journal} {Phys. Rev. A}\ }\textbf {\bibinfo {volume} {92}},\ \bibinfo {pages} {042710} (\bibinfo {year} {2015})}\BibitemShut {NoStop}%
\bibitem [{\citenamefont {Beterov}\ \emph {et~al.}(2016)\citenamefont {Beterov}, \citenamefont {Saffman}, \citenamefont {Yakshina}, \citenamefont {Tretyakov}, \citenamefont {Entin}, \citenamefont {Bergamini}, \citenamefont {Kuznetsova},\ and\ \citenamefont {Ryabtsev}}]{Beterov2016b}%
  \BibitemOpen
  \bibfield  {author} {\bibinfo {author} {\bibfnamefont {I.~I.}\ \bibnamefont {Beterov}}, \bibinfo {author} {\bibfnamefont {M.}~\bibnamefont {Saffman}}, \bibinfo {author} {\bibfnamefont {E.~A.}\ \bibnamefont {Yakshina}}, \bibinfo {author} {\bibfnamefont {D.~B.}\ \bibnamefont {Tretyakov}}, \bibinfo {author} {\bibfnamefont {V.~M.}\ \bibnamefont {Entin}}, \bibinfo {author} {\bibfnamefont {S.}~\bibnamefont {Bergamini}}, \bibinfo {author} {\bibfnamefont {E.~A.}\ \bibnamefont {Kuznetsova}},\ and\ \bibinfo {author} {\bibfnamefont {I.~I.}\ \bibnamefont {Ryabtsev}},\ }\bibfield  {title} {\bibinfo {title} {Two-qubit gates using adiabatic passage of the {S}tark-tuned {F}\"orster resonances in {R}ydberg atoms},\ }\href@noop {} {\bibfield  {journal} {\bibinfo  {journal} {Phys. Rev. A}\ }\textbf {\bibinfo {volume} {94}},\ \bibinfo {pages} {062307} (\bibinfo {year} {2016})}\BibitemShut {NoStop}%
\bibitem [{\citenamefont {Browaeys}\ \emph {et~al.}(2016)\citenamefont {Browaeys}, \citenamefont {Barredo},\ and\ \citenamefont {Lahaye}}]{Browaeys2016}%
  \BibitemOpen
  \bibfield  {author} {\bibinfo {author} {\bibfnamefont {A.}~\bibnamefont {Browaeys}}, \bibinfo {author} {\bibfnamefont {D.}~\bibnamefont {Barredo}},\ and\ \bibinfo {author} {\bibfnamefont {T.}~\bibnamefont {Lahaye}},\ }\bibfield  {title} {\bibinfo {title} {Experimental investigations of dipole-dipole interactions between a few {R}ydberg atoms},\ }\href@noop {} {\bibfield  {journal} {\bibinfo  {journal} {J. Phys. B: At. Mol. Opt. Phys.}\ }\textbf {\bibinfo {volume} {49}},\ \bibinfo {pages} {152001} (\bibinfo {year} {2016})}\BibitemShut {NoStop}%
\bibitem [{\citenamefont {Anand}\ \emph {et~al.}(2024)\citenamefont {Anand}, \citenamefont {Bradley}, \citenamefont {White}, \citenamefont {Ramesh}, \citenamefont {Singh},\ and\ \citenamefont {Bernien}}]{Anand2024}%
  \BibitemOpen
  \bibfield  {author} {\bibinfo {author} {\bibfnamefont {S.}~\bibnamefont {Anand}}, \bibinfo {author} {\bibfnamefont {C.~E.}\ \bibnamefont {Bradley}}, \bibinfo {author} {\bibfnamefont {R.}~\bibnamefont {White}}, \bibinfo {author} {\bibfnamefont {V.}~\bibnamefont {Ramesh}}, \bibinfo {author} {\bibfnamefont {K.}~\bibnamefont {Singh}},\ and\ \bibinfo {author} {\bibfnamefont {H.}~\bibnamefont {Bernien}},\ }\bibfield  {title} {\bibinfo {title} {A dual-species {R}ydberg array},\ }\href@noop {} {\bibfield  {journal} {\bibinfo  {journal} {Nat. Phys.}\ }\textbf {\bibinfo {volume} {20}},\ \bibinfo {pages} {1744} (\bibinfo {year} {2024})}\BibitemShut {NoStop}%
\bibitem [{\citenamefont {Giudici}\ \emph {et~al.}(2025)\citenamefont {Giudici}, \citenamefont {Veroni}, \citenamefont {Giudice}, \citenamefont {Pichler},\ and\ \citenamefont {Zeiher}}]{Giudici2025}%
  \BibitemOpen
  \bibfield  {author} {\bibinfo {author} {\bibfnamefont {G.}~\bibnamefont {Giudici}}, \bibinfo {author} {\bibfnamefont {S.}~\bibnamefont {Veroni}}, \bibinfo {author} {\bibfnamefont {G.}~\bibnamefont {Giudice}}, \bibinfo {author} {\bibfnamefont {H.}~\bibnamefont {Pichler}},\ and\ \bibinfo {author} {\bibfnamefont {J.}~\bibnamefont {Zeiher}},\ }\bibfield  {title} {\bibinfo {title} {Fast entangling gates for {R}ydberg atoms via resonant dipole-dipole interaction},\ }\href@noop {} {\bibfield  {journal} {\bibinfo  {journal} {PRX Quantum}\ }\textbf {\bibinfo {volume} {6}},\ \bibinfo {pages} {030308} (\bibinfo {year} {2025})}\BibitemShut {NoStop}%
\bibitem [{\citenamefont {Palm}\ \emph {et~al.}(2026)\citenamefont {Palm}, \citenamefont {Li}, \citenamefont {Feng}, \citenamefont {Jürgensen},\ and\ \citenamefont {Simon}}]{Palm2026}%
  \BibitemOpen
  \bibfield  {author} {\bibinfo {author} {\bibfnamefont {L.}~\bibnamefont {Palm}}, \bibinfo {author} {\bibfnamefont {B.}~\bibnamefont {Li}}, \bibinfo {author} {\bibfnamefont {Y.~C.}\ \bibnamefont {Feng}}, \bibinfo {author} {\bibfnamefont {M.}~\bibnamefont {Jürgensen}},\ and\ \bibinfo {author} {\bibfnamefont {J.}~\bibnamefont {Simon}},\ }\bibfield  {title} {\bibinfo {title} {Enhanced {R}ydberg blockade through {RF}-tuned {F}\"orster resonance},\ }\href {https://arxiv.org/abs/2603.07958} {\bibfield  {journal} {\bibinfo  {journal} {arXiv:2603.07958}\ } (\bibinfo {year} {2026})}\BibitemShut {NoStop}%
\bibitem [{\citenamefont {Mostaan}\ \emph {et~al.}(2026)\citenamefont {Mostaan}, \citenamefont {Goswami}, \citenamefont {Schmelcher},\ and\ \citenamefont {Mukherjee}}]{Mostaan2026}%
  \BibitemOpen
  \bibfield  {author} {\bibinfo {author} {\bibfnamefont {N.}~\bibnamefont {Mostaan}}, \bibinfo {author} {\bibfnamefont {K.}~\bibnamefont {Goswami}}, \bibinfo {author} {\bibfnamefont {P.}~\bibnamefont {Schmelcher}},\ and\ \bibinfo {author} {\bibfnamefont {R.}~\bibnamefont {Mukherjee}},\ }\bibfield  {title} {\bibinfo {title} {High-fidelity non-adiabatic dark state gates for neutral atoms},\ }\href@noop {} {\bibfield  {journal} {\bibinfo  {journal} {arXiv:2602.13885}\ } (\bibinfo {year} {2026})}\BibitemShut {NoStop}%
\bibitem [{\citenamefont {Keating}\ \emph {et~al.}(2015)\citenamefont {Keating}, \citenamefont {Cook}, \citenamefont {Hankin}, \citenamefont {Jau}, \citenamefont {Biedermann},\ and\ \citenamefont {Deutsch}}]{Keating2015}%
  \BibitemOpen
  \bibfield  {author} {\bibinfo {author} {\bibfnamefont {T.}~\bibnamefont {Keating}}, \bibinfo {author} {\bibfnamefont {R.~L.}\ \bibnamefont {Cook}}, \bibinfo {author} {\bibfnamefont {A.~M.}\ \bibnamefont {Hankin}}, \bibinfo {author} {\bibfnamefont {Y.-Y.}\ \bibnamefont {Jau}}, \bibinfo {author} {\bibfnamefont {G.~W.}\ \bibnamefont {Biedermann}},\ and\ \bibinfo {author} {\bibfnamefont {I.~H.}\ \bibnamefont {Deutsch}},\ }\bibfield  {title} {\bibinfo {title} {Robust quantum logic in neutral atoms via adiabatic {R}ydberg dressing},\ }\href@noop {} {\bibfield  {journal} {\bibinfo  {journal} {Phys. Rev. A}\ }\textbf {\bibinfo {volume} {91}},\ \bibinfo {pages} {012337} (\bibinfo {year} {2015})}\BibitemShut {NoStop}%
\bibitem [{\citenamefont {Bergonzoni}\ \emph {et~al.}(2026)\citenamefont {Bergonzoni}, \citenamefont {Riso},\ and\ \citenamefont {Pupillo}}]{Bergonzoni2025}%
  \BibitemOpen
  \bibfield  {author} {\bibinfo {author} {\bibfnamefont {M.}~\bibnamefont {Bergonzoni}}, \bibinfo {author} {\bibfnamefont {R.~R.}\ \bibnamefont {Riso}},\ and\ \bibinfo {author} {\bibfnamefont {G.}~\bibnamefont {Pupillo}},\ }\bibfield  {title} {\bibinfo {title} {Fast quantum gates for neutral atoms separated by a few tens of micrometers},\ }\href {https://arXiv.org/abs/2511.20437} {\bibfield  {journal} {\bibinfo  {journal} {arXiv:2511.20437}\ } (\bibinfo {year} {2026})}\BibitemShut {NoStop}%
\bibitem [{\citenamefont {Ildefonso}\ \emph {et~al.}(2025)\citenamefont {Ildefonso}, \citenamefont {Byun}, \citenamefont {Konovalov}, \citenamefont {Kazemi}, \citenamefont {Schuler},\ and\ \citenamefont {Lechner}}]{Ildefonso2025}%
  \BibitemOpen
  \bibfield  {author} {\bibinfo {author} {\bibfnamefont {P.}~\bibnamefont {Ildefonso}}, \bibinfo {author} {\bibfnamefont {A.}~\bibnamefont {Byun}}, \bibinfo {author} {\bibfnamefont {A.}~\bibnamefont {Konovalov}}, \bibinfo {author} {\bibfnamefont {J.}~\bibnamefont {Kazemi}}, \bibinfo {author} {\bibfnamefont {M.}~\bibnamefont {Schuler}},\ and\ \bibinfo {author} {\bibfnamefont {W.}~\bibnamefont {Lechner}},\ }\bibfield  {title} {\bibinfo {title} {Expanding the neutral atom gate set: Native i{SWAP} and exchange gates from dipolar {R}ydberg interactions},\ }\href {https://arxiv.org/abs/2512.05037} {\bibfield  {journal} {\bibinfo  {journal} {arXiv:2512.05037}\ } (\bibinfo {year} {2025})}\BibitemShut {NoStop}%
\bibitem [{Nor()}]{Norrell2026SM}%
  \BibitemOpen
  \href@noop {} {}\bibinfo {note} {See Supplemental Material at [URL will be inserted by publisher] which includes references \cite{Doultsinos2025b,Jaksch2000,Saffman2020,Jandura2022,Evered2023,fan2024fast,Sibalic2017}.}\BibitemShut {Stop}%
\bibitem [{\citenamefont {Doultsinos}\ \emph {et~al.}(2025)\citenamefont {Doultsinos}, \citenamefont {Delakouras},\ and\ \citenamefont {Petrosyan}}]{Doultsinos2025b}%
  \BibitemOpen
  \bibfield  {author} {\bibinfo {author} {\bibfnamefont {G.}~\bibnamefont {Doultsinos}}, \bibinfo {author} {\bibfnamefont {A.}~\bibnamefont {Delakouras}},\ and\ \bibinfo {author} {\bibfnamefont {D.}~\bibnamefont {Petrosyan}},\ }\bibfield  {title} {\bibinfo {title} {Fundamental bound on entanglement generation between interacting {R}ydberg atoms},\ }\href@noop {} {\bibfield  {journal} {\bibinfo  {journal} {arXiv:2512.13379}\ } (\bibinfo {year} {2025})}\BibitemShut {NoStop}%
\bibitem [{\citenamefont {Wesenberg}\ \emph {et~al.}(2007)\citenamefont {Wesenberg}, \citenamefont {M\o{}lmer}, \citenamefont {Rippe},\ and\ \citenamefont {Kr\"{o}ll}}]{Wesenberg2007}%
  \BibitemOpen
  \bibfield  {author} {\bibinfo {author} {\bibfnamefont {J.~H.}\ \bibnamefont {Wesenberg}}, \bibinfo {author} {\bibfnamefont {K.}~\bibnamefont {M\o{}lmer}}, \bibinfo {author} {\bibfnamefont {L.}~\bibnamefont {Rippe}},\ and\ \bibinfo {author} {\bibfnamefont {S.}~\bibnamefont {Kr\"{o}ll}},\ }\bibfield  {title} {\bibinfo {title} {Scalable designs for quantum computing with rare-earth-ion-doped crystals},\ }\href@noop {} {\bibfield  {journal} {\bibinfo  {journal} {Phys. Rev. A}\ }\textbf {\bibinfo {volume} {75}},\ \bibinfo {pages} {012304} (\bibinfo {year} {2007})}\BibitemShut {NoStop}%
\bibitem [{\citenamefont {Chew}\ \emph {et~al.}(2022)\citenamefont {Chew}, \citenamefont {Tomita}, \citenamefont {Mahesh}, \citenamefont {Sugawa}, \citenamefont {de~L{\'e}s{\'e}leuc},\ and\ \citenamefont {Ohmori}}]{YChew2022}%
  \BibitemOpen
  \bibfield  {author} {\bibinfo {author} {\bibfnamefont {Y.}~\bibnamefont {Chew}}, \bibinfo {author} {\bibfnamefont {T.}~\bibnamefont {Tomita}}, \bibinfo {author} {\bibfnamefont {T.~P.}\ \bibnamefont {Mahesh}}, \bibinfo {author} {\bibfnamefont {S.}~\bibnamefont {Sugawa}}, \bibinfo {author} {\bibfnamefont {S.}~\bibnamefont {de~L{\'e}s{\'e}leuc}},\ and\ \bibinfo {author} {\bibfnamefont {K.}~\bibnamefont {Ohmori}},\ }\bibfield  {title} {\bibinfo {title} {Ultrafast energy exchange between two single {R}ydberg atoms on a nanosecond timescale},\ }\href@noop {} {\bibfield  {journal} {\bibinfo  {journal} {Nat. Photon.}\ }\textbf {\bibinfo {volume} {16}},\ \bibinfo {pages} {724} (\bibinfo {year} {2022})}\BibitemShut {NoStop}%
\bibitem [{\citenamefont {Saffman}\ \emph {et~al.}(2020)\citenamefont {Saffman}, \citenamefont {Beterov}, \citenamefont {Dalal}, \citenamefont {Paez},\ and\ \citenamefont {Sanders}}]{Saffman2020}%
  \BibitemOpen
  \bibfield  {author} {\bibinfo {author} {\bibfnamefont {M.}~\bibnamefont {Saffman}}, \bibinfo {author} {\bibfnamefont {I.~I.}\ \bibnamefont {Beterov}}, \bibinfo {author} {\bibfnamefont {A.}~\bibnamefont {Dalal}}, \bibinfo {author} {\bibfnamefont {E.~J.}\ \bibnamefont {Paez}},\ and\ \bibinfo {author} {\bibfnamefont {B.~C.}\ \bibnamefont {Sanders}},\ }\bibfield  {title} {\bibinfo {title} {Symmetric {R}ydberg controlled$-{Z}$ gates with adiabatic pulses},\ }\href@noop {} {\bibfield  {journal} {\bibinfo  {journal} {Phys. Rev. A}\ }\textbf {\bibinfo {volume} {101}},\ \bibinfo {pages} {062309} (\bibinfo {year} {2020})}\BibitemShut {NoStop}%
\bibitem [{\citenamefont {Fan}\ \emph {et~al.}(2024)\citenamefont {Fan}, \citenamefont {Wang},\ and\ \citenamefont {Sun}}]{fan2024fast}%
  \BibitemOpen
  \bibfield  {author} {\bibinfo {author} {\bibfnamefont {X.}~\bibnamefont {Fan}}, \bibinfo {author} {\bibfnamefont {X.}~\bibnamefont {Wang}},\ and\ \bibinfo {author} {\bibfnamefont {Y.}~\bibnamefont {Sun}},\ }\bibfield  {title} {\bibinfo {title} {Fast entangling quantum gates with almost-resonant modulated driving},\ }\href@noop {} {\bibfield  {journal} {\bibinfo  {journal} {Fundamental Research}\ } (\bibinfo {year} {2024})}\BibitemShut {NoStop}%
\bibitem [{\citenamefont {\u{S}ibali\'{c}}\ \emph {et~al.}(2017)\citenamefont {\u{S}ibali\'{c}}, \citenamefont {Pritchard}, \citenamefont {Adams},\ and\ \citenamefont {Weatherill}}]{Sibalic2017}%
  \BibitemOpen
  \bibfield  {author} {\bibinfo {author} {\bibfnamefont {N.}~\bibnamefont {\u{S}ibali\'{c}}}, \bibinfo {author} {\bibfnamefont {J.}~\bibnamefont {Pritchard}}, \bibinfo {author} {\bibfnamefont {C.}~\bibnamefont {Adams}},\ and\ \bibinfo {author} {\bibfnamefont {K.}~\bibnamefont {Weatherill}},\ }\bibfield  {title} {\bibinfo {title} {Arc: An open-source library for calculating properties of alkali {R}ydberg atoms},\ }\href@noop {} {\bibfield  {journal} {\bibinfo  {journal} {Comp. Phys. Commun.}\ }\textbf {\bibinfo {volume} {220}},\ \bibinfo {pages} {319} (\bibinfo {year} {2017})}\BibitemShut {NoStop}%
\end{thebibliography}%

\newcommand{\Hint}{H_{\mathrm{int}}}
\newcommand{\Hsfull}{\mathcal{H}_{\mathrm{full}}}
\newcommand{\PiRS}{\Pi_{rs}}
\newcommand{\etafull}{\eta_{\mathrm{full}}}
\newtheorem{theorem}{Theorem}
\newtheorem{lemma}{Lemma}
\theoremstyle{definition}
\newtheorem{assumption}{Assumption}

%\newcommand{\um}{\mu \rm{m}}

% Mark's template for an initial-tagged comment cmd
%\newcommand{\ColorComment}[3]{%
%	{\colorbox{#1}{\color{white}   \textsf{\textbf{#2}}} \textcolor{#1}{#3}}}
%  Colorful box, initials, phrase 

%\definecolor{mscolor}{rgb}{0,0.5,0.5}\newcommand{\ms}[1]{\ColorComment{mscolor}{ms}{#1}}

\setcounter{page}{1}
\setcounter{section}{0}
\setcounter{figure}{0}
\setcounter{equation}{0}
\setcounter{table}{0}

\renewcommand{\thepage}{SM.\arabic{page}}
\renewcommand{\theequation}{SM.\arabic{equation}}
\renewcommand{\thesection}{SM.\arabic{section}}
\renewcommand{\thefigure}{SM.\arabic{figure}}
\renewcommand{\thetable}{SM.\arabic{table}}

%\begin{document}

\onecolumngrid
\noindent
{ \bf Supplemental Material for \\
\Large \center Entangling gate performance and fidelity limits with neutral atom F\"orster resonances}

\rsub{\section{Rank-One Drive to Two-Eigenstate Systems}}

\rsub{Here, we consider fidelity limitations for rank one drive to two-eigenstate systems, where only a single Rydberg-Rydberg state is addressed and the eigenstate branches combine to form bright and dark states. Each atom contains Hilbert spaces divided into the Rydberg and qubit manifolds:}\\
\rsubmath{
\begin{equation}
    \mathcal{H}^A = \underbrace{\mathrm{span}\{\ket{0},\ket{1}\}}_{\mathcal{H}_q^A}\oplus\ \mathrm{span}\{\ket{a},\ket{\alpha}\},
    \qquad
    \mathcal{H}^B = \underbrace{\mathrm{span}\{\ket{0},\ket{1}\}}_{\mathcal{H}_q^B}\oplus\ \mathrm{span}\{\ket{b},\ket{\beta}\},
\end{equation}
}
\rsub{and the pair evolves under the Hamiltonian}
\rsubmath{\begin{equation}
    H(t)=\underbrace{H_A(t)\otimes\mathbb{I}+\mathbb{I}\otimes H_B(t)}_{H_{\rm loc}(t)}+H_{\rm int}, \qquad H_{\rm int}=V\left(\ket{ab}\bra{\alpha\beta}+\ket{\alpha\beta}\bra{ab}\right),
\end{equation}}
\rsub{with $\hbar=1$ as in the main text. We construct the decay observable, which is the Rydberg number operator}
\rsubmath{\begin{equation}\label{eq:proj_operator}
    \hat{\Pi}_r=Q_A\otimes\mathbb{I}+\mathbb{I}\otimes Q_B, \qquad Q_A=\ket{a}\bra{a}+\ket{\alpha}\bra{\alpha},\qquad Q_B=\ket{b}\bra{b}+\ket{\beta}\bra{\beta},
\end{equation}
such that $P_r=\bra{\psi}\hat{\Pi}_r\ket{\psi}\in[0,2]$ counts the number of excited atoms and $T_R=\int_0^{T}dt P_r(t)$ is the integrated Rydberg time. The figure of merit is therefore}
\rsubmath{\begin{equation}
    \eta\equiv V\int_0^Tdt\, P_r(t).
\end{equation}
The following analysis is a generalization of the rank one, one eigenstate bound of \cite{Doultsinos2025b}. The extra Rydberg states require careful accounting resulting in a more complicated proof structure.\\
\begin{assumption}[rank-one drive]\label{assumption:rank_1_drive}
     For every $t$, $H_A(t)$ leaves $\mathcal{H}_q^A\oplus \mathbb{C}\ket{a}$ invariant. Likewise $H_B(t)$ leaves $\mathcal{H}_q^B\oplus \mathbb{C}\ket{b}$. Additionally $\mathbb{C}\ket{\alpha}$ and $\mathbb{C}\ket{\beta}$ are invariant under $H_{A}$ and $H_{B}$ as well.
\end{assumption}}
\rsub{\begin{assumption}[boundary conditions]\label{assumption:boundary_conditions}
    $\ket{\psi(0)}\in\mathcal{H}_q^A\otimes \mathcal{H}_q^B$ is a product state, and $\ket{\psi(T)}\in\mathcal{H}_q^A\otimes \mathcal{H}_q^B$ is maximally entangled.
\end{assumption}}

\rsub{
\begin{theorem}\label{theorem:rank_1_two_estsate}
    Under Assumptions \ref{assumption:rank_1_drive} and \ref{assumption:boundary_conditions},
    \begin{equation}
        \eta\geq 1+\frac{\pi}{2},
    \end{equation}
    and consequently the fidelity any entangling gate can achieve is bounded by
    \begin{equation}
        1-\mathcal{F}\geq \frac{1+\pi/2}{V\tau_R},
    \end{equation}
    for interaction strength $V$ and Rydberg lifetime $\tau_R$.
\end{theorem}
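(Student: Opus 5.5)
The plan is to reduce the two-eigenstate, rank-one problem to the one-eigenstate, rank-one problem of \cite{Doultsinos2025b}, for which $\eta\geq 1+\pi/2$ is already established and tight. The fidelity statement then follows at once from $\varepsilon_{\rm decay}=\Gamma T_R$ and $\eta=VT_R$.

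\textbf{Step 1: the reachable subspace.} By Assumption \ref{assumption:rank_1_drive}, $H_A$ cannot connect $\ket{\alpha}$ to $\mathcal{H}_q^A\oplus\mathbb{C}\ket{a}$, because both subspaces are invariant. The same holds for $\ket{\beta}$ on atom $B$. The only operator that creates $\alpha$ or $\beta$ population is therefore $H_{\rm int}$, and it does so only through $\ket{ab}\leftrightarrow\ket{\alpha\beta}$. Starting from $\mathcal{H}_q^A\otimes\mathcal{H}_q^B$, I will show that the state stays in
\[
\mathcal{K}=(\mathcal{H}_q^A\oplus\mathbb{C}\ket{a})\otimes(\mathcal{H}_q^B\oplus\mathbb{C}\ket{b})\ \oplus\ \mathbb{C}\ket{\alpha\beta}.
\]
The proof is a Duhamel/invariance argument: $\mathcal{K}$ is invariant under $H(t)$ for all $t$. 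Local terms act on $\ket{\alpha\beta}$ only by phases $\epsilon_\alpha(t)+\epsilon_\beta(t)$, which can be absorbed into the interaction picture. In particular, the mixed states $\ket{\alpha q}$, $\ket{\alpha b}$, and so on, are never populated. I will write the state as $\ket{\psi}=\ket{\phi}+c\ket{\alpha\beta}$, with $\ket{\phi}$ in the rank-one product space.

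\textbf{Step 2: cost and entanglement rate.} The cost is $P_r=\bra{\phi}\hat\Pi_r\ket{\phi}+2|c|^2$, so the doubly excited block $\{\ket{ab},\ket{\alpha\beta}\}$ is charged $2$ per unit population. This is the same charge the one-eigenstate model places on $\ket{rr}$.

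For entanglement, the local Hilbert spaces are now $\mathcal{H}_q\oplus\mathbb{C}\ket{r}\oplus\mathbb{C}\ket{\alpha}$. The reduced density matrix of atom $A$ is block diagonal: $\rho_A=\mathrm{Tr}_B\ket{\phi}\bra{\phi}\oplus|c|^2\ket{\alpha}\bra{\alpha}$. Cross terms vanish because $\ket{\beta}$ is orthogonal to everything in $\ket{\phi}$'s $B$ factor. Local Hamiltonians do not change the Schmidt spectrum, so $\dot S$ comes only from $H_{\rm int}$. I would compute $\dot c_1$ for the largest Schmidt coefficient by first-order perturbation of the eigenvalue of $\rho_A$, obtaining something of the form
\[
|\dot S|\lesssim V\,|c_{ab}|\,|c|\times(\text{Schmidt-geometry factor}).
\]
The factor will be taken from \cite{Doultsinos2025b}.

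\textbf{Step 3: comparison with the one-eigenstate bound.} The goal is to show that for every admissible $\psi$ with entropy $s$ there is a one-eigenstate state $\tilde\psi$ with the same $S$, no larger $P_r$, and no smaller achievable $|\dot S|$. The natural map lumps $\ket{ab}$ and $\ket{\alpha\beta}$ into a single $\ket{rr}$ with amplitude $\sqrt{|c_{ab}|^2+|c|^2}$ and compares the interaction-induced rates. The key inequality is $|c_{ab}||c|\leq \tfrac12(|c_{ab}|^2+|c|^2)$. This should show that the exchange coupling entangles no faster, per unit of doubly excited population, than an energy shift $V$ on $\ket{rr}$. Granting this, $G_{\rm two}(s)\geq G_{\rm one}(s)$ pointwise in Eq.~(\ref{eq:eta}). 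Integrating over $s\in[0,1]$ and invoking the tight bound of \cite{Doultsinos2025b} then gives $\eta\geq 1+\pi/2$.

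\textbf{Main obstacle.} I expect Step 2/3 to be the hard part. The $\ket{\alpha}$ sector adds an extra Schmidt direction, so the largest Schmidt coefficient may move into or out of the $\alpha$ block. The rate formula of \cite{Doultsinos2025b} then has to be redone with three-block reduced states.

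My first approach would be a case split on whether $c_1^2$ is attained in the $\ket{\phi}$ block or equals $|c|^2$. I expect the second case to be non-optimal: it requires $|c|^2\geq 1/2$, which costs $P_r\geq 1$ at low entropy, and this should exceed the integrand of the one-eigenstate optimum there. In the first case the perturbation calculation closely mirrors the original proof, with $V\,\mathrm{Re}(\cdot)$ replaced by $V\,\mathrm{Im}(\bar c_{ab}c\,\cdots)$.
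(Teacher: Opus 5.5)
\textbf{There is a genuine gap in Step 3.} Your Steps 1 and 2 are correct and coincide with the paper's invariant-subspace and block-structure lemmas. Step 3, however, cannot be completed. In this model the pointwise inequality $G_{\rm two}(s)\geq G_{\rm one}(s)$ for the min-entropy integrand of Eq.~(\ref{eq:eta}) is false, so no comparison map with the properties you list exists.

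Take $\ket{\psi_x}=\sqrt{x}\ket{ab}+i\sqrt{1-x}\ket{\alpha\beta}$ with $1/2<x<1$.
\begin{itemize}
\item It lies in your $\mathcal{K}$ and is dynamically accessible: $\pi$ pulses on $\ket{1}\leftrightarrow\ket{a},\ket{b}$ followed by exchange.
\item Its Schmidt coefficients are $\sqrt{x}$ and $\sqrt{1-x}$, so $S=-\log_2 x$, and $P_r=2$.
\item Local terms do not change the spectrum, so $\dot c_1=\mathrm{Im}\bra{ab}H_{\rm int}\ket{\psi_x}=V\sqrt{1-x}$. This gives $P_r/|\dot S|=(\ln 2/V)/\sqrt{2^s-1}$.
\end{itemize}
This is exactly the state that saturates the paper's rank-two bound. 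Integrating gives $V\int_0^1 G_{\rm two}\,ds\leq \pi/2<1+\pi/2$, so min-entropy bookkeeping in the rank-one two-eigenstate model can never yield more than $\pi/2$.

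The same example defeats your other steps:
\begin{itemize}
\item Your lumping map sends $\ket{\psi_x}$ to the product state $\ket{rr}$, so it does not preserve $S$.
\item Your case split does not help. Here $c_1$ sits in the $\varphi$ block, which is your ``easy'' case. There $\dot c_1=V\,\mathrm{Im}(a_1b_1z)$, with $a_1=\langle u_1|a\rangle$, $b_1=\langle v_1|b\rangle$ and $z$ the $\ket{\alpha\beta}$ amplitude. This rate is driven by $z$, not by the other $\varphi$-block Schmidt components as in \cite{Doultsinos2025b}.
\item The heuristic that exchange entangles no faster per doubly excited population than an energy shift on $\ket{rr}$ is wrong. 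Population held entirely in $\{\ket{ab},\ket{\alpha\beta}\}$ generates min-entropy at the maximal rate, whereas population in $\ket{rr}$ alone generates none.
\end{itemize}

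\textbf{The missing idea.} Entanglement carried by the $\ket{\alpha\beta}$ Schmidt branch is cheap but useless under rank-one drive. The amplitude $z$ must vanish at $t=T$ and can only be unloaded through $\ket{ab}$. You therefore need a progress measure that ignores $\varepsilon=|z|$ and uses both endpoints, not a pointwise bound in $S$.

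The paper builds such a measure as follows.
\begin{itemize}
\item It works with the Schmidt coefficients $d_1\geq d_2\geq d_3$ of the $\varphi$ component alone.
\item From the block structure it derives $|\dot d_i|\leq V\varepsilon w_i$ with $\sum_i w_i\leq 1$, and $P_r\geq 2\sum_i w_i d_i^2+2\varepsilon^2$.
\item It shows that $W(d)=-G(d_1)+2d_2^2$ satisfies $\varepsilon|\partial_i W|\leq 2d_i^2+2\varepsilon^2$. Hence $\dot W\leq VP_r$ almost everywhere.
\item It integrates from $d(0)=(1,0,0)$ to $d(T)=(1/\sqrt2,1/\sqrt2,0)$, obtaining $\eta\geq 0-\bigl(-(1+\pi/2)\bigr)$.
\end{itemize}
The $-G(d_1)$ term supplies $\pi/2$, from draining $d_1$. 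The $2d_2^2$ term supplies the extra $1$, from the requirement that weight end up in $d_2$ rather than in $\varepsilon$. Min-entropy cannot see this distinction, and that is why your route stalls at $\pi/2$.
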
}
\rsub{We construct a series of Lemmas to aid us in proving Theorem \ref{theorem:rank_1_two_estsate}.}
\rsub{\begin{lemma}[invariant subspace]\label{lemma:invariant_subspace}
    Let
    \begin{equation}
        \mathcal{H}_\varphi=\mathrm{span}\{\ket{0},\ket{1},\ket{a}\}_A\otimes\mathrm{span}\{\ket{0},\ket{1},\ket{b}\}_B, \qquad \mathcal{S}=\mathcal{H}_\varphi \oplus \mathbb{C}\ket{\alpha\beta}.
    \end{equation}
    Under Assumption \ref{assumption:rank_1_drive}, $H(t)\mathcal{S}\subseteq\mathcal{S}$ for all t. Since $\ket{\psi(0)}\in\mathcal{H}_q^A\otimes \mathcal{H}_q^B\subset\mathcal{S}$, the trajectory satisfies $\ket{\psi(t)}\in \mathcal{S}\ \forall\ t\in[0,T]$.\end{lemma}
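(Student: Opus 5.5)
We prove Lemma~\ref{lemma:invariant_subspace} by checking that each piece of $H(t)=H_A(t)\otimes\mathbb{I}+\mathbb{I}\otimes H_B(t)+H_{\rm int}$ maps $\mathcal{S}$ into itself. Since $\mathcal{S}$ is finite dimensional and spanned by product basis vectors, it is enough to act on a spanning set. That set is the nine product vectors $\ket{x y}$ with $x\in\{0,1,a\}$, $y\in\{0,1,b\}$, together with $\ket{\alpha\beta}$. Invariance of the trajectory then follows from the Schr\"odinger equation.

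\emph{Local terms on $\mathcal{H}_\varphi$.} By Assumption~\ref{assumption:rank_1_drive}, $H_A(t)$ leaves $\mathrm{span}\{\ket{0},\ket{1},\ket{a}\}$ invariant, and $\mathbb{I}$ trivially leaves $\mathrm{span}\{\ket{0},\ket{1},\ket{b}\}$ invariant. Hence $H_A(t)\otimes\mathbb{I}$ maps $\mathcal{H}_\varphi$ into itself, and the same argument applies to $\mathbb{I}\otimes H_B(t)$.

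\emph{Local terms on $\ket{\alpha\beta}$.} Since $\mathbb{C}\ket{\alpha}$ is invariant under $H_A$, we have $H_A\ket{\alpha}=c_A(t)\ket{\alpha}$, and likewise $H_B\ket{\beta}=c_B(t)\ket{\beta}$. Therefore $H_{\rm loc}\ket{\alpha\beta}=(c_A+c_B)\ket{\alpha\beta}\in\mathcal{S}$. This is the point where the extra clause in Assumption~\ref{assumption:rank_1_drive} is needed. Invariance of $\mathcal{H}_q\oplus\mathbb{C}\ket{a}$ alone does not force $\ket{\alpha}$ to stay put, because a Hermitian operator that preserves a subspace also preserves its orthogonal complement $\mathbb{C}\ket{\alpha}$. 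So the clause is in fact automatic for Hermitian $H_A$, and I would remark on this.

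\emph{Interaction term.} $H_{\rm int}$ annihilates every product basis vector other than $\ket{ab}$ and $\ket{\alpha\beta}$. It sends $\ket{ab}\in\mathcal{H}_\varphi$ to $V\ket{\alpha\beta}\in\mathcal{S}$ and $\ket{\alpha\beta}$ to $V\ket{ab}\in\mathcal{H}_\varphi$. Thus $H_{\rm int}\mathcal{S}\subseteq\mathcal{S}$.

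\emph{Trajectory.} Let $P$ be the orthogonal projector onto $\mathcal{S}$. Because $H(t)$ is Hermitian and preserves $\mathcal{S}$, it also preserves $\mathcal{S}^\perp$, so $[H(t),P]=0$. Then $\phi(t)=(\mathbb{I}-P)\ket{\psi(t)}$ obeys $i\dot\phi=H(t)\phi$ with $\phi(0)=0$, because $\ket{\psi(0)}\in\mathcal{H}_q^A\otimes\mathcal{H}_q^B\subset\mathcal{H}_\varphi\subset\mathcal{S}$. Uniqueness of solutions to this linear ODE, assuming $H(t)$ is locally integrable, gives $\phi\equiv0$ on $[0,T]$.

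The only delicate step is the treatment of $\ket{\alpha\beta}$ under the local drives, handled above. The rest is bookkeeping over basis vectors.
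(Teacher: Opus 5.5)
Your proof is correct and follows the paper's route: $H_{\rm loc}$ preserves $\mathcal{H}_\varphi$, $H_{\rm int}$ swaps $\ket{ab}\leftrightarrow\ket{\alpha\beta}$ within $\mathcal{S}$, and invariance of the trajectory follows. Your explicit check that $H_{\rm loc}\ket{\alpha\beta}\in\mathcal{S}$ via the extra clause of Assumption~\ref{assumption:rank_1_drive}, which the paper leaves implicit, and your projector/ODE-uniqueness argument are welcome tightenings; however, your sentence saying invariance of $\mathcal{H}_q^A\oplus\mathbb{C}\ket{a}$ ``does not force'' $\ket{\alpha}$ to stay put contradicts your own (correct) observation that for Hermitian $H_A$ on this four-dimensional single-atom space the clause is automatic, so reword it.
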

\begin{proof}
    With $\ket{\varphi}\in \mathcal{H}_\varphi$ and $c\in \mathbb{C}$, we consider the local and exchange dynamics separately.\\
    \textit{Local.} By Assumption \ref{assumption:rank_1_drive}, $H_A$ and $H_B$ preserve their local subspaces containing $\ket{0},\ket{1},$ and $\ket{a}$ or $\ket{b}$ respectively. Hence, $H_{\rm loc}$ preserves $\mathcal{H}_\varphi$. \\
    \textit{Exchange.} Since $\bra{\alpha\beta}\ket{\varphi}=0, \ H_{\rm int}\ket{\varphi}=V\expval{ab |\varphi}\ket{\alpha\beta}\in\mathbb{C}\ket{\alpha \beta}$, and $H_{\rm int}\ket{\alpha\beta}= V\ket{ab}\in \mathcal{H}_\varphi$.\\
    Therefore, $H\left( \ket{\varphi}+c\ket{\alpha\beta}\right)\in \mathcal{S}$, and any trajectory starting in $S$ must remain in $S$.
\end{proof}
Thus the states $\ket{a\beta}$ and $\ket{\alpha b}$ remain entirely unpopulated and at all times $t$ we can write
\begin{equation}\label{eq:phi_decomposition}
    \ket{\psi(t)}=\ket{\varphi(t)}+z(t)\ket{\alpha\beta}, \qquad \ket{\varphi(t)}\in\mathcal{H}_\varphi, \qquad \varepsilon(t) \equiv \abs{z(t)}.
\end{equation}
}
\rsub{
\begin{lemma}[block structure] \label{lemma:block_structure}
    With the decomposition presented in (\ref{eq:phi_decomposition}),
    \begin{equation}
        \rho_A=\mathrm{Tr_B}\ket{\psi}\bra{\psi}=\rho_A^\varphi\oplus \varepsilon^2\ket{\alpha}\bra{\alpha},
    \end{equation}
    where $\rho_A^\varphi=\mathrm{Tr_B}\ket{\varphi}\bra{\varphi}$. Consequently $\ket{\psi}$ has at most four non-zero Schmidt coefficients: $\varepsilon$ with Schmidt vector $\ket{\alpha \beta}$ and $d_1\geq d_2\geq d_3\geq0$ with Schmidt vectors in $\mathcal{H}_\varphi$. Moreover $\sum_id_i^2 +\varepsilon^2=1$.
\end{lemma}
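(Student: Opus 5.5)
The plan is to compute the reduced density matrix directly from the decomposition (\ref{eq:phi_decomposition}), using the orthogonal splitting of each single-atom space. On atom $A$, $\mathcal{H}^A=\mathcal{K}_A\oplus\mathbb{C}\ket{\alpha}$ with $\mathcal{K}_A=\mathrm{span}\{\ket{0},\ket{1},\ket{a}\}$, and analogously on $B$ with $\mathcal{K}_B=\mathrm{span}\{\ket{0},\ket{1},\ket{b}\}$. By Lemma \ref{lemma:invariant_subspace}, $\ket{\varphi}\in\mathcal{K}_A\otimes\mathcal{K}_B$ and the extra component is $z\ket{\alpha}\otimes\ket{\beta}$.

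First, expand
\[
\ket{\psi}\bra{\psi}=\ket{\varphi}\bra{\varphi}+z^*\ket{\varphi}\bra{\alpha\beta}+z\ket{\alpha\beta}\bra{\varphi}+\varepsilon^2\ket{\alpha\beta}\bra{\alpha\beta}
\]
and take the partial trace over $B$ term by term. The diagonal terms give $\rho_A^\varphi$, which is supported on $\mathcal{K}_A$, and $\varepsilon^2\ket{\alpha}\bra{\alpha}$. For a cross term, $\mathrm{Tr}_B\ket{\varphi}\bra{\alpha\beta}=\sum_k \langle k|\varphi\rangle\langle \alpha|$ summed over a basis of $\mathcal{H}^B$. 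The only surviving contribution is the one from $\bra{\beta}$, and it requires $(\mathbb{I}\otimes\bra{\beta})\ket{\varphi}$. Since $\ket{\varphi}$ has no $\ket{\beta}$ component on atom $B$, this vanishes, so both cross terms drop out. This gives $\rho_A=\rho_A^\varphi\oplus\varepsilon^2\ket{\alpha}\bra{\alpha}$, block diagonal with respect to $\mathcal{K}_A\oplus\mathbb{C}\ket{\alpha}$.

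Next, the Schmidt structure. The squared Schmidt coefficients are the eigenvalues of $\rho_A$, which is the union of the spectrum of $\rho_A^\varphi$ (on the 3-dimensional space $\mathcal{K}_A$, so at most three nonzero values $d_1^2\ge d_2^2\ge d_3^2$) with the eigenvalue $\varepsilon^2$ on $\ket{\alpha}$.

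Rather than rely only on spectra, I would make this explicit. Take a Schmidt decomposition $\ket{\varphi}=\sum_{i\le3} d_i\ket{u_i}\ket{v_i}$ with $u_i\in\mathcal{K}_A$ and $v_i\in\mathcal{K}_B$, where the $d_i$ are the Schmidt coefficients of the unnormalized vector $\ket{\varphi}$. Then
\[
\ket{\psi}=\sum_i d_i\ket{u_i}\ket{v_i}+\varepsilon\, e^{i\theta}\ket{\alpha}\ket{\beta}.
\]
The vector $\ket{\alpha}$ is orthogonal to every $u_i$, and $\ket{\beta}$ is orthogonal to every $v_i$. So, after absorbing the phase $e^{i\theta}$ into $\ket{\beta}$, this is already a valid Schmidt decomposition. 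The Schmidt vectors are $\ket{\alpha\beta}$ with coefficient $\varepsilon$, and the others lie in $\mathcal{H}_\varphi$.

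Finally, normalization. Since $\ket{\varphi}\perp\ket{\alpha\beta}$, we have $1=\langle\psi|\psi\rangle=\|\varphi\|^2+\varepsilon^2=\sum_i d_i^2+\varepsilon^2$.

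I do not expect any step to be a real obstacle. The only point needing care is the vanishing of the cross terms, which rests entirely on $\ket{\varphi}$ having no $\ket{\beta}$ (respectively $\ket{\alpha}$) component. That property comes from the invariance established in Lemma \ref{lemma:invariant_subspace}, so the argument should cite it explicitly.
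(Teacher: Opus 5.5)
Your proof is correct and follows essentially the same route as the paper: expand $\ket{\psi}\bra{\psi}$, trace out $B$, show the cross terms vanish, and read off the block-diagonal form together with the Schmidt structure and normalization. Your reason for the vanishing cross terms is sharper than the paper's: you use that $\ket{\varphi}\in\mathcal{H}_\varphi$ has no $\ket{\beta}$ component on $B$, whereas $\langle\alpha\beta|\varphi\rangle=0$ alone, which is what the paper cites, would not suffice. Your explicit Schmidt decomposition also makes the paper's ``consequently'' step fully rigorous.
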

\begin{proof}
    Expand $\mathrm{Tr_B}\ket{\psi}\bra{\psi}=\rho_A^\varphi+\varepsilon^2\ket{\alpha}\bra{\alpha}+z\mathrm{Tr_B}\ket{\alpha\beta}\bra{\varphi}+\mathrm{h.c.}$. The cross terms are zero as $\expval{\alpha\beta|\varphi}=0$ by orthogonality. The reduced denisty matrix therefore has a direct-sum form, and the $3\times3$ block structure contributes at most three Schmidt coefficients.
\end{proof}
}

\rsub{Pick Schmidt vectors $\ket{u_i}$ and $\ket{v_i}$ for $\ket{\varphi}$, such that the following holds
\begin{equation}
    \ket{\psi}=\sum_{i=1}^3d_i\ket{u_iv_i}+ z\ket{\alpha\beta}, \qquad \sum_{i=1}^3d_i^2 + \varepsilon^2=1.
\end{equation}
The sets $\{\ket{u_{i}}\}^{3}_{i=1}$ and $\{\ket{v_{i}}\}^{3}_{i=1}$
 therefore are orthonormal bases for $\mathrm{span}\{\ket{0},\ket{1},\ket{a}\}$ and $\mathrm{span}\{\ket{0},\ket{1},\ket{b}\}$, respectively.}

\rsub{\begin{lemma}[entanglement rate]\label{lemma:entanglement_rate}
For almost every $t\in[0,T]$, there exists non-negative overlap factors $w_1,w_2,w_3$ such that 
\begin{align}
    \label{eq:transport_bound}\abs{\dot{d_i}}&\leq V\varepsilon w_i \qquad (i=1,2,3)\\
    \sum_iw_i&\leq 1\\
     \label{eq:cost_bound}P_r&\geq 2\sum_iw_id_i^2+2\varepsilon^2.
\end{align}
\end{lemma}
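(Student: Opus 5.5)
The plan is to differentiate the Schmidt decomposition of $\ket{\varphi(t)}$ from Lemma~\ref{lemma:block_structure} directly, and to set
\begin{equation*}
x_i\equiv\braket{a|u_i},\qquad y_i\equiv\braket{b|v_i},\qquad w_i\equiv|x_i|\,|y_i| .
\end{equation*}
This choice of overlap factors should make all three inequalities of the lemma follow from elementary estimates.

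First I will derive the equation of motion for the $\mathcal{H}_\varphi$ component. By Lemma~\ref{lemma:invariant_subspace}, $H_{\rm int}\ket{\varphi}=V\braket{ab|\varphi}\ket{\alpha\beta}$ has no $\mathcal{H}_\varphi$ part, while $H_{\rm int}\,z\ket{\alpha\beta}=Vz\ket{ab}$. Projecting the Schr\"odinger equation onto $\mathcal{H}_\varphi$ therefore gives
\begin{equation*}
\dot{\ket{\varphi}}=-iH_{\rm loc}\ket{\varphi}-iVz\ket{ab}.
\end{equation*}
The local term generates a product of local unitaries on $\mathrm{span}\{\ket{0},\ket{1},\ket{a}\}$ and $\mathrm{span}\{\ket{0},\ket{1},\ket{b}\}$. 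Such unitaries leave the singular values $d_i$ unchanged, so only the source term $-iVz\ket{ab}$ can move the $d_i$.

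Second, I will apply first-order perturbation theory for singular values. Write $\ket{\varphi}$ as a $3\times3$ coefficient matrix $M$ with singular value decomposition $M=\sum_i d_i\ket{u_i}\bra{\bar v_i}$. Where the singular values are distinct and nonzero, $\dot d_i=\mathrm{Re}\,\bra{u_iv_i}\dot{\ket{\varphi}}$. The local part contributes $\mathrm{Re}(-i\,d_i\times\text{real})=0$, leaving
\begin{equation*}
\dot d_i=V\,\mathrm{Re}\!\left(-iz\,\bar x_i\bar y_i\right),\qquad\text{hence}\qquad |\dot d_i|\le V\varepsilon\,|x_i||y_i|=V\varepsilon w_i .
\end{equation*}
The normalization bound follows from Cauchy--Schwarz. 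Since $\{\ket{u_i}\}$ and $\{\ket{v_i}\}$ are orthonormal bases containing $\ket{a}$ and $\ket{b}$ respectively, $\sum_i w_i\le(\sum_i|x_i|^2)^{1/2}(\sum_i|y_i|^2)^{1/2}=1$.

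Third, I will compute the cost. The decomposition of Lemma~\ref{lemma:block_structure} has no cross terms between $\ket{\varphi}$ and $\ket{\alpha\beta}$, so $P_r=\bra{\varphi}\hat\Pi_r\ket{\varphi}+2\varepsilon^2$. Orthonormality of the $\ket{v_i}$ kills the off-diagonal terms in $\bra{\varphi}Q_A\otimes\mathbb{I}\ket{\varphi}$, leaving $\sum_i d_i^2|x_i|^2$, and similarly $\sum_i d_i^2|y_i|^2$ for atom $B$. Applying AM--GM termwise, $|x_i|^2+|y_i|^2\ge2|x_i||y_i|$, which gives Eq.~(\ref{eq:cost_bound}).

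The main obstacle is justifying the derivative formula for $\dot d_i$ ``for almost every $t$''. At crossings $d_i=d_j$, or where some $d_i=0$, the Schmidt vectors need not be differentiable. I would handle this as follows.
\begin{itemize}
\item Singular values of an absolutely continuous matrix path are Lipschitz in $t$, by Weyl's perturbation inequality, and hence differentiable almost everywhere.
\item On the open set where the singular values are distinct and nonzero, the perturbation formula holds as derived above.
\item On the complementary set, at any point of differentiability one can choose Schmidt vectors (taking one-sided derivatives, or invoking Rellich/Kato analytic selection locally) for which the same formula holds. Where $d_i=0$, note that $\dot d_i=0$ almost everywhere on the zero level set, so the bound is trivial there.
\end{itemize}
Since the $w_i$ are defined through that same choice of Schmidt vectors, the three inequalities hold simultaneously for almost every $t$.
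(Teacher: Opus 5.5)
Your proposal is correct and essentially the paper's proof. You use the same overlap factors $w_i=|\langle a|u_i\rangle|\,|\langle b|v_i\rangle|$ and arrive at the same rate formula $\dot d_i=\mathrm{Im}\bra{u_iv_i}H\ket{\psi}$, reached via singular-value perturbation of $\ket{\varphi}$ rather than the paper's derivative of the eigenvalues of $\rho_A$; the Cauchy--Schwarz step giving $\sum_i w_i\le 1$ and the AM--GM step for the cost bound are also the same.

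Your handling of Schmidt-coefficient crossings and zeros is more careful than the paper, which simply differentiates along a branch with $d_i>0$. This covers Weyl-type regularity of the $d_i$, choosing paired Schmidt vectors that diagonalize the first-order compression at degenerate points, and $\dot d_i=0$ at points of differentiability on the zero set. Note that the Rellich--Kato selection needs analytic $t$-dependence, which general controls do not provide, so the one-sided-derivative argument is the one to rely on there.
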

\begin{proof}
    Consider an eigenvalue branch with $d_i>0$. Since $d_i^2=\bra{u_i}\rho_A\ket{u_i}\equiv \lambda_i$, differentiation gives
    \begin{equation}
        \dot{\lambda}_i=\bra{u_i}\dot{\rho}_A\ket{u_i}.
    \end{equation}
    With $\dot{\rho}_A=-i\mathrm{Tr_B}[H,\ket{\psi}\bra{\psi}]$,
    \begin{equation}
        \dot{\lambda}_i=2d_i\mathrm{Im}\bra{u_iv_i}H\ket{\psi}.
    \end{equation}
    Therefore,
    \begin{equation}
        \dot{d}_i=\mathrm{Im}\bra{u_iv_i}H\ket{\psi}.
    \end{equation}
    Considering local controls
    \begin{equation}
        \bra{u_iv_i}H_{\mathrm{loc}}\ket{\psi}=d_i\left( \bra{u_i}H_A\ket{u_i} + \bra{v_i}H_B\ket{v_i}\right)\in \mathbb{R},
    \end{equation}
    so local controls do not change Schmidt coefficients. For the interaction term,
    \begin{equation}
        \bra{u_iv_i}H_{\mathrm{int}}\ket{\psi}=Va_ib_iz, \qquad a_i\equiv\expval{u_i|a}, \qquad b_i\equiv\expval{v_i|b}.
    \end{equation}
    Thus, with $w_i\equiv |a_i||b_i|$,
    \begin{equation}
        |\dot{d}_i|\leq V\varepsilon w_i.
    \end{equation}
    Due to the orthonormality of the Schmidt vectors,
    \begin{equation}
        \sum_i|a_i|^2=\sum_i|b_i|^2=1.
    \end{equation}
    The Cauchy-Schwarz inequality therefore provides
    \begin{equation}
        \sum_i w_i\sum_i|a_i||b_i|\leq 1.
    \end{equation}
    Finally, directly evaluating $P_r$ gives
    \begin{equation}
        P_r=\sum_i d_i^2(|a_i|^2 +|b_i|^2)+2\varepsilon^2\geq 2\sum_i w_id_i^2 + 2\varepsilon^2,
    \end{equation}
    where the inequality follows from $|a_i|^2 + |b_i|^2\geq 2|a_i||b_i|$.
\end{proof}}

\rsub{Let
\begin{equation}
    \mathcal{D}=\{d=(d_1,d_2,d_3)\ : \ d_1\geq0, \sum_id_i^2\le1\}
\end{equation} 
and 
\begin{equation}
    \varepsilon(d)=\left(1-\sum_i d_i^2\right)^{1/2}.
\end{equation}
\begin{lemma}[pointwise reduction]\label{lemma:pointwise_reduction}
Let $W$ be differentiable at $d\in\mathcal{D}$. The inequality 
\begin{equation}\label{eq:pointwise_lemma_claim}
    \varepsilon\sum_iw_i\abs{\frac{\partial W}{\partial d_i}}\leq2\sum_i w_id_i^2+2\varepsilon^2
\end{equation}
holds for all $w_i\geq 0$ with $\sum_iw_i\leq1$ if and only if
\begin{equation}\label{eq:W-condition}
    \varepsilon\abs{\frac{\partial W}{\partial d_i}}\leq 2d_i^2 + 2\varepsilon^2 \qquad \text{for every i}.
\end{equation}
\end{lemma}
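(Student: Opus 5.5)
The plan is to treat \eqref{eq:pointwise_lemma_claim} as an affine inequality in the overlap factors $w=(w_1,w_2,w_3)$, with $d$ (and hence $\varepsilon=\varepsilon(d)$ and the partial derivatives $\partial W/\partial d_i$) held fixed. An affine function is nonpositive on a convex polytope exactly when it is nonpositive at the vertices. The admissible set
\[
\{w : w_i\ge 0,\ \textstyle\sum_i w_i\le 1\}
\]
is the standard simplex with vertices $0, e_1, e_2, e_3$, so the equivalence should reduce to checking these four points.

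First I will move everything to one side. Define
\[
F(w)=\sum_i w_i c_i-2\varepsilon^2,\qquad c_i\equiv \varepsilon\abs{\frac{\partial W}{\partial d_i}}-2d_i^2 .
\]
With this notation, \eqref{eq:pointwise_lemma_claim} says precisely that $F(w)\le 0$. The key structural point is that the term $2\varepsilon^2$ on the right-hand side is not weighted by any $w_i$.

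For necessity (``only if''), I will specialize to the vertex $w=e_i$. It is admissible because $w_i=1$, the other components vanish, and $\sum_j w_j=1$. At this point $F(e_i)\le 0$ reads $c_i\le 2\varepsilon^2$, which is exactly \eqref{eq:W-condition} for that index $i$.

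For sufficiency (``if''), I will assume $c_i\le 2\varepsilon^2$ for every $i$. Since each $w_i\ge 0$, this gives
\[
F(w)\le 2\varepsilon^2\sum_i w_i-2\varepsilon^2=2\varepsilon^2\Big(\sum_i w_i-1\Big)\le 0,
\]
where the last step uses $\sum_i w_i\le 1$.

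I expect no genuine obstacle. The only care needed is bookkeeping: $w_i\ge0$ is what allows the termwise bound to be multiplied through by $w_i$, and the unweighted $2\varepsilon^2$ term is what absorbs the slack when $\sum_i w_i<1$. At the vertex $w=0$ the inequality holds trivially, since it reads $0\le 2\varepsilon^2$.
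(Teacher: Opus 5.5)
Your proof is correct and matches the paper's argument: necessity by specializing to $w=e_i$, and sufficiency by multiplying the $i$-th condition by $w_i\ge 0$, summing, and absorbing $2\varepsilon^2\sum_i w_i\le 2\varepsilon^2$. The affine-function/simplex-vertex framing explains why the vertices suffice, but the underlying computation is the same as the paper's.
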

\begin{proof}
   ($\Rightarrow$) Choose $w$ to be the $i$-th vector $w_i=1$ and all others zero. Eq. (\ref{eq:pointwise_lemma_claim}) reduces to Eq. (\ref{eq:W-condition}) exactly.\\
   ($\Leftarrow$) Multiply the $i$-th inequality (Eq. (\ref{eq:W-condition})) by $w_i\geq 0$ and sum over $i$:
   \begin{equation}
       \varepsilon\sum_iw_i\abs{\frac{\partial W}{\partial d_i}}\leq 2\sum_iw_id_i^2+2\varepsilon^2\sum_iw_i\leq2\sum_iw_id_i^2 + 2 \varepsilon^2.
   \end{equation}
\end{proof}
}

\rsub{Define $G\ :\ [0,1]\mapsto \mathbb{R}$ and $W\ : \ \mathcal{D}\mapsto \mathbb{R}$ as
\begin{align}
    G(x)&=\begin{cases}
    2x^2, & 0\le x\le \tfrac{1}{\sqrt2},\\
    1+\dfrac{\pi}{2}-2\arccos x,
      & \tfrac{1}{\sqrt2}\le x\le 1,
  \end{cases}\\
  W(d) &= -G(d_1)+2d_2^2 \label{eq:W}
\end{align}
The two piecewise components of $G$ agree in value and first derivative at $x=\tfrac{1}{\sqrt2}$ ($G\left(\tfrac{1}{\sqrt2}\right) = 1$ and $G'\left(\tfrac{1}{\sqrt2}\right)=2\sqrt{2}$).
\begin{lemma}[comparison bound]\label{lemma:comparison-bound}
    The function $W(d)$ (Eq. (\ref{eq:W})) satisfies Eq. (\ref{eq:W-condition}) at every $d\in\mathcal{D}$, with $d_1<1$.
\end{lemma}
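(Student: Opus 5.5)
The plan is to check condition (\ref{eq:W-condition}) from Lemma \ref{lemma:pointwise_reduction} directly, one coordinate $i=1,2,3$ at a time, using the explicit form (\ref{eq:W}) of $W$. The ingredients are the AM--GM inequality and the constraint $\varepsilon^2 = 1-\sum_i d_i^2$ on $\mathcal{D}$.

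First, the cases $i=3$ and $i=2$. Since $W$ does not depend on $d_3$, we have $\partial W/\partial d_3 = 0$, and the $i=3$ condition holds trivially. For $i=2$, $\partial W/\partial d_2 = 4d_2$, so we need $4\varepsilon|d_2| \le 2d_2^2 + 2\varepsilon^2$. This is equivalent to $2(|d_2|-\varepsilon)^2 \ge 0$, which always holds.

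Next, $i=1$ splits along the two branches of $G$. Note that $G$ is $C^1$ at $1/\sqrt2$, which was already checked, so $W$ is differentiable on $\mathcal{D}$ whenever $d_1<1$.
\begin{itemize}
\item On $0\le d_1\le 1/\sqrt2$: here $|\partial W/\partial d_1| = 4d_1$, and the same AM--GM argument as for $i=2$ applies.
\item On $1/\sqrt2 < d_1 < 1$: here $\partial W/\partial d_1 = -2/\sqrt{1-d_1^2}$. Set $s \equiv \sqrt{1-d_1^2}$, which is positive because $d_1<1$ (this is where the hypothesis $d_1<1$ enters). On $\mathcal{D}$ we have $\varepsilon^2 = s^2 - d_2^2 - d_3^2$, so $\varepsilon\in[0,s]$. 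The required inequality becomes
\begin{equation*}
f(\varepsilon) \equiv d_1^2 + \varepsilon^2 - \varepsilon/s \ge 0 \qquad \text{for } \varepsilon\in[0,s].
\end{equation*}
\end{itemize}

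The main step is controlling $f$ on the second branch. The function $f$ is a convex quadratic with unconstrained minimizer $\varepsilon_* = 1/(2s)$. The branch condition $d_1 > 1/\sqrt2$ is exactly $s^2 < 1/2$, i.e.\ $s < 1/(2s)$. So $\varepsilon_*$ lies to the right of the interval, and $f$ is decreasing on $[0,s]$. Its minimum is therefore attained at $\varepsilon=s$, where
\begin{equation*}
f(s) = d_1^2 + s^2 - 1 = 0.
\end{equation*}
Hence $f\ge 0$ on $[0,s]$, with equality precisely when $d_2=d_3=0$.

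This also explains the choice of the arccos branch: it is tuned so that this inequality is saturated. The same mechanism is what will make the bound $1+\pi/2$ tight.

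The case analysis above covers all three coordinates, so $W$ satisfies (\ref{eq:W-condition}) at every $d\in\mathcal{D}$ with $d_1<1$.
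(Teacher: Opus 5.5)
Your proof is correct and matches the paper's: the same four-case split, the same $2(d_i-\varepsilon)^2\ge 0$ argument for $i=2$ and for the quadratic branch of $i=1$, and on the arccos branch the same monotonicity argument on $[0,m]$ with $m=\sqrt{1-d_1^2}$ and equality at $\varepsilon=m$. The only difference is that you keep the inequality as a quadratic in $\varepsilon$ instead of dividing by $\varepsilon$, which covers $\varepsilon=0$ without comment. One caution on your closing remark: the lemma does not show that $1+\pi/2$ is tight, and the paper explicitly says the rank-one two-eigenstate bound is not shown to be tight, since saturating this pointwise inequality does not give a realizable trajectory that also saturates the transport and cost bounds of Lemma~\ref{lemma:entanglement_rate}.
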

\begin{proof}
    We require normalization: $\varepsilon^2=1 -d_1^2-d_2^2-d_3^2\leq1-d_1^2$. Therefore 
    \begin{equation}
        \varepsilon\leq\sqrt{1-d_1^2}\equiv m. \label{eq:epsilon_leq_m}
    \end{equation}
    There are three coordinates, and two branches in $G$ for $i=1$, making four cases total.\\
    \textit{Case 1:} $i=3$. $W$ does not depend on $d_3$, so $\tfrac{\partial W}{\partial d_3}=0$, and Eq. (\ref{eq:W-condition}) reads $0\leq2d_3^2+2\varepsilon^2$, which holds.\\
    \textit{Case 2:} $i=2$. $\tfrac{\partial W}{\partial d_2}=\abs{4d_2}=4d_2$, so we must show $4d_2\varepsilon\leq 2d_2^2+2\varepsilon^2$. Rearranged, this reads $0\leq2d_2^2-4d_2\varepsilon+2\varepsilon^2=2(d_2-\varepsilon)^2$, which holds. \\
    \textit{Case 3:} $i=1$ and $d_1\leq\tfrac{1}{\sqrt{2}}$. $\tfrac{\partial W}{\partial d_1}=G'(d_1)=4d_1$, and the same logic from Case 2 holds: $4d_1\varepsilon\leq 2d_1^2 + 2\varepsilon^2\iff 0\leq 2(d_1-\varepsilon)^2$.\\
    \textit{Case 4:} $i=1$ and $d_1\geq\tfrac{1}{\sqrt{2}}$. $\tfrac{\partial W}{\partial d_1}=G'(d_1)=2/m$. We must show
    \begin{equation}
        \frac{2\varepsilon}{m}\leq 2d_1^2+2\varepsilon^2.
    \end{equation}
    dividing by $2\varepsilon$, an alternative form is:
    \begin{equation}
        \frac{1}{m}\leq \frac{d_1^2}{\varepsilon}+\varepsilon\equiv f(\varepsilon)
    \end{equation}
    $f'(\varepsilon)=1-d_1^2/\varepsilon^{2}$, which is negative for $\varepsilon<d_1$, so $f$ is decreasing on the interval $(0,d_1]$. $\varepsilon$ must exist on this interval for two reasons: (1) $m\leq d_1$ because $d_1\geq \tfrac{1}{\sqrt{2}}$ implies $m^2=1-d_1^2\leq\tfrac{1}{2}\leq d_1^2$, and (2) $\varepsilon\leq m$ by Eq. (\ref{eq:epsilon_leq_m}). Hence, $0<\varepsilon\leq m \leq d_1$, and since $f$ is decreasing
    \begin{equation}
        f(\varepsilon)\geq f(m)=\frac{d_1^2}{m}+m=\frac{1}{m},
    \end{equation}
    which is what was required.
\end{proof}
}

\rsub{\begin{lemma}[regularity of $W$]\label{lemma:regularity_of_W}
    Let $\chi(t)\equiv\arccos d_1(t)$. $\chi$ is Lipshitz on $[0,T]$ with constant $V$, that is, $\abs{\chi(t_2)-\chi(t_1)}\leq V\abs{t_2-t_1}$ for all $t_1,t_2$. Furthermore, $t\mapsto W(d(t))$ is Lipshitz and therefore absolutely continuous.   
\end{lemma}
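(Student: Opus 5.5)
The plan is to obtain everything from Lipschitz continuity of the Schmidt coefficients, combined with the rate estimate of Lemma \ref{lemma:entanglement_rate} and the normalization bound $\varepsilon\le\sqrt{1-d_1^2}$ from Eq. (\ref{eq:epsilon_leq_m}). I will assume the control Hamiltonians are bounded, $\sup_t\|H(t)\|<\infty$, which is implicit throughout.

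\emph{Step 1: regularity of $d_1$ and $d_2$.} Write $\ket{\varphi(t)}$ through its $3\times 3$ coefficient matrix $C(t)$ in the bases $\{\ket{0},\ket{1},\ket{a}\}$ and $\{\ket{0},\ket{1},\ket{b}\}$. Then $d_1\ge d_2\ge d_3$ are the singular values of $C(t)$. By Lemma \ref{lemma:invariant_subspace}, $\ket{\psi}$ solves a Schr\"odinger equation with bounded $H$, so $\ket{\psi(t)}$, and hence its projection $\ket{\varphi(t)}$, is Lipschitz in $t$. Weyl's perturbation inequality for singular values gives $|d_i(t_2)-d_i(t_1)|\le\|C(t_2)-C(t_1)\|$. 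Therefore each $d_i$ is Lipschitz, hence absolutely continuous and differentiable almost everywhere.

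At points where $d_1$ is differentiable, I will justify the formula $\dot d_1=\mathrm{Im}\bra{u_1v_1}H\ket{\psi}$ even at eigenvalue crossings. The tool is the Danskin-type variational characterization $d_1^2=\max_{\|u\|=1}\bra{u}\rho_A^\varphi\ket{u}$: at a point of differentiability, the derivative equals $\bra{u_1}\dot\rho_A^\varphi\ket{u_1}$ for any maximizer $u_1$. Lemma \ref{lemma:block_structure} guarantees that the $\ket{\alpha}$ block decouples, so this is consistent with the computation in Lemma \ref{lemma:entanglement_rate}. Hence $|\dot d_1|\le V\varepsilon w_1\le V\varepsilon$ almost everywhere.

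\emph{Step 2: the Lipschitz bound for $\chi$.} Let $Z=\{t: d_1(t)=1\}$, which is closed by continuity, and on $Z$ we have $\chi=0$. On each open component of $[0,T]\setminus Z$, the function $\arccos$ is locally Lipschitz on $[0,1)$, so $\chi$ is absolutely continuous there. The chain rule together with $\varepsilon\le\sqrt{1-d_1^2}=\sin\chi$ gives
\begin{equation*}
|\dot\chi|=\frac{|\dot d_1|}{\sqrt{1-d_1^2}}\le\frac{V\varepsilon}{\sqrt{1-d_1^2}}\le V \quad\text{a.e.}
\end{equation*}
Now take $t_1<t_2$. If $[t_1,t_2]$ avoids $Z$, integrate directly. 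Otherwise let $s_1$ and $s_2$ be the first and last points of $Z$ in $[t_1,t_2]$. Continuity of $\chi$ up to the endpoints of components, together with $\chi(s_1)=\chi(s_2)=0$, gives
\begin{equation*}
|\chi(t_2)-\chi(t_1)|\le\chi(t_1)+\chi(t_2)\le V(s_1-t_1)+V(t_2-s_2)\le V(t_2-t_1).
\end{equation*}
This endpoint handling near $d_1=1$, where $\arccos$ has infinite slope, is the main obstacle. The key point is that the bound $\varepsilon\le\sin\chi$ cancels the singularity, and the splitting at $Z$ makes this rigorous without needing to differentiate at $Z$.

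\emph{Step 3: $W$ is Lipschitz.} Since $2d_2^2$ is a product of bounded Lipschitz functions, it is Lipschitz. For $-G(d_1)$, decompose according to the branch. Where $d_1\le 1/\sqrt2$, $G(d_1)=2d_1^2$ is Lipschitz in $d_1$. Where $d_1\ge 1/\sqrt2$, $G=1+\pi/2-2\chi$ is Lipschitz with constant $2V$ by Step 2. Because the two branches agree in value at $1/\sqrt2$, a triangle inequality through a crossing time of the level $1/\sqrt2$ (which exists by the intermediate value theorem) patches the two estimates into a global Lipschitz bound for $G\circ d_1$. Hence $t\mapsto W(d(t))$ is Lipschitz, and therefore absolutely continuous, as claimed.
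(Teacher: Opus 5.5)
Your proof is correct and follows the same route as the paper: the a.e.\ bound $|\dot\chi|\le V\varepsilon/\sqrt{1-d_1^2}\le V$ from the transport bound, a triangle inequality through a zero of $\chi$ in the set where $d_1=1$, and a piecewise argument on $\{d_1\le 1/\sqrt2\}$ and $\{d_1\ge 1/\sqrt2\}$ for $W$. Your extra steps are genuinely useful: Weyl's inequality makes the $d_i$ Lipschitz, and the Danskin-type argument handles crossings. Together they supply the absolute continuity and the differentiability at degenerate points that the paper's ``by continuity'' step tacitly assumes.
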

\begin{proof}
    Whenever $d_1<1$, we may differentiate $\chi$. We then use, in order presented the transport bound (Eq. (\ref{eq:transport_bound})), the fact that $w_1\leq\sum_i w_i\leq 1$, and Eq. (\ref{eq:epsilon_leq_m}):
    \begin{equation}\label{eq:chi_dot_bound}
        \abs{\dot\chi}=\frac{\abs{\dot{d_1}}}{\sqrt{1-d_1^2}}\leq\frac{V\varepsilon w_1}{\sqrt{1-d_1^2}}\leq\frac{V\varepsilon}{\sqrt{1-d_1^2}}\leq V.
    \end{equation}
    So $\chi$ is bounded when $d_1<1$. For $d_1=1$, where $\chi=0$, consider the set $U=\{t  : d_1(t)<1\}$. $U$ is open and a countable union of disjoint intervals. By continuity and Eq. (\ref{eq:chi_dot_bound}) $\chi$ is Lipshitz. For $t_1<t_2$ in different components of $U$, there exists $\tau \in (t_1,t_2)$ with $d_1(\tau)=1$, and we can write
    \begin{equation}
        \abs{\chi(t_1)-\chi(t_2)}\leq \chi(t_1)+\chi(t_2)=\abs{\chi(t_1)-\chi(\tau)}+\abs{\chi(\tau)-\chi(t_2)}\leq V(\tau-t_1)+V(t_2-\tau)=V(t_2-t_1).
    \end{equation}
    Hence, $\chi$ is V-Lipshitz on $[0,T]$.\\
    For $W$, on the closed set $\{t \ : \ d_1(t)\geq\tfrac{1}{\sqrt{2}}\}$, $W(d(t))=2\chi(t)-(1+\tfrac{\pi}{2})+2d_2(t)^2$. On the closed set $\{t \ : \ d_1(t)\leq\tfrac{1}{\sqrt{2}}\}$, $W(d(t))=-2d_1(t)^2+2d_2(t)^2$. Both expressions are Lipshitz in $t$, and so $W(d(t))$ is Lipshitz.
\end{proof}
}
\rsub{We can now turn to proving Theorem \ref{theorem:rank_1_two_estsate}.
\begin{proof}
    The preceeding lemmas imply, for almost all $t$ (up to limiting set of times of measure-zero)
    \begin{equation}\label{eq:dW_dt_ineq}
        \frac{d}{dt}W({\mathbf{d}}(t))\leq VP_r(t), \qquad \mathbf{d}=(d_1,d_2,d_3).
    \end{equation}
    To see this explicitly, choose a time at which the Lipshitz functions and bounds in Lemma \ref{lemma:entanglement_rate} are differentiable. If $d_1=1$, normalization requires $\chi=d_2=d_2=\epsilon=0$. Because $\chi$ and $d_2$ are non-negative, their derivatives vanish, and $W=-G(\chi)+2d_2^2$ from Lemma \ref{lemma:regularity_of_W} provides $\dot{W}=0\leq VP_r$.\\
    If $d_1<1$, we differentiate and chain four inequalities
    \begin{equation}
      \frac{dW}{dt}
      \;\overset{(1)}{=}\; \sum_i \frac{\partial W}{\partial d_i}\dot d_i
      \;\overset{(2)}{\le}\; \sum_i \Big|\frac{\partial W}{\partial d_i}\Big|\,|\dot d_i|
      \;\overset{(3)}{\le}\; V\varepsilon\sum_i w_i\Big|\frac{\partial W}{\partial d_i}\Big|
      \;\overset{(4)}{\le}\; V\Big(2\sum_i w_i d_i^2+2\varepsilon^2\Big)
      \;\overset{(5)}{\le}\; V P_r ,
    \end{equation}
    where (1) is the chain rule; (2) bounds a sum by the sum of absolute values; (3) is the transport bound from Lemma \ref{lemma:entanglement_rate} (Eq. (\ref{eq:transport_bound})); (4) combines Lemma \ref{lemma:comparison-bound} with  Lemma \ref{lemma:pointwise_reduction}; and (5) is the cost bound from Lemma \ref{lemma:entanglement_rate} (Eq. (\ref{eq:cost_bound})).
    Thus Eq. (\ref{eq:dW_dt_ineq}) holds almost everywhere. \\
    By Lemma \ref{lemma:regularity_of_W}, $W(\mathbf{d}(t))$ is absolutely continuous, so we can integrate the inequality
    \begin{equation}\label{eq:eta_of_W}
        \eta=V\int_0^TP_r(t)dt\geq W(\mathbf{d}(T))-W(\mathbf{d}(0)).
    \end{equation}
    The initial state has $\mathbf{d}(0)=(1,0,0)$ and $\varepsilon(0)=0$, hence 
    \begin{equation}
        W(\mathbf{d}(0))=-G(1) = -\left(1+\frac{\pi}{2}\right).
    \end{equation}
    The final maximally entangled qubit state has $\mathbf{d}(T)=(\tfrac{1}{\sqrt{2}},\frac{1}{\sqrt{2}},0)$ and $\varepsilon(T)=0$, hence
    \begin{equation}
        W(\mathbf{d}(T))=-1+2\left(\frac{1}{\sqrt{2}}\right)^2=0.
    \end{equation}
    Substituting these values into the right-hand side of Eq. (\ref{eq:eta_of_W}) yields
    \begin{equation}
        \eta \geq 1+\frac{\pi}{2}.
    \end{equation}
\end{proof}
}
\rsub{The same bound as was determined for one-eigenstate models in \cite{Doultsinos2025b}. We note, though, that for the other configurations considered (rank-one drive to one-eigenstate models and rank-two drive to two-eigenstate models) the corresponding bound is a tight bound; that is, there exist pulses that exactly saturate the bound. The above proof is not a tight bound and as such, the tight bounding coefficient may be greater than $1+\frac{\pi}{2}$.}\\

\rsub{\section{Rank-Two Drive to Two-Eigenstate Systems}}
    
We consider the Förster exchange Hamiltonian
\begin{equation}
\Hint = V\left(\ket{ab}\bra{\alpha\beta}+\ket{\alpha\beta}\bra{ab}\right)
\label{eq:Hint_note}
\end{equation}
and the decay observable
\begin{equation}
\Pi=(\ket a\bra a+\ket\alpha\bra\alpha)_A+(\ket b\bra b+\ket\beta\bra\beta)_B.
\label{eq:Pi_note}
\end{equation}

We consider the subspace, \rsub{with $\ket{g}\in \mathrm{span}\{\ket{0},\ket{1}\}$} 
$$
\Hsfull=\mathrm{span}\{\ket g,\ket a,\ket\alpha\}_A\otimes \mathrm{span}\{\ket g,\ket b,\ket\beta\}_B.
$$

Let
$$
|\psi\rangle=\sum_i c_i |u_i v_i\rangle
$$
be a Schmidt decomposition, with the coefficients ordered so that
\(c_1=c_{\max}\) is the largest Schmidt coefficient. We fix a
min-entropy value
$$
S_{\min}(\psi)=s,\qquad 0<s<1.
$$
Since
$$
S_{\min}(\psi)=-\log_2(c_1^2),
$$
we have
$$
x:=c_1^2=2^{-s}\in(1/2,1).
$$
The interval \(0<s<1\) is the interior of the path from a product
state, for which \(S_{\min}=0\) and \(c_1^2=1\), to a one-ebit
target, for which \(S_{\min}=1\) and \(c_1^2=1/2\). We exclude
the endpoints in the pointwise argument because at \(s=0\) there is
no nonzero Schmidt remainder to normalize, while at \(s=1\) the
largest Schmidt coefficient may become degenerate.

For \(0<s<1\), however, \(x>1/2\). Since the squared Schmidt
coefficients sum to one, the remaining Schmidt weight is
$$
\sum_{i>1} c_i^2 = 1-x < x,
$$
so no other Schmidt coefficient can equal \(c_1\). Thus the largest
Schmidt branch is simple, and we may unambiguously set
$$
|u\rangle:=|u_1\rangle,\qquad |v\rangle:=|v_1\rangle.
$$
To isolate the dominant product branch from the rest of the state,
define
\begin{equation}
|n\rangle
= \frac{1}{\sqrt{1-x}}
  \sum_{i>1}
  c_i |u_i v_i\rangle .
\label{eq:n_note}
\end{equation}
Then \(\|n\|=1\), and the state can be written as
\begin{equation}
|\psi\rangle
=
\sqrt{x}\,|uv\rangle
+
\sqrt{1-x}\,|n\rangle .
\label{eq:psi_note}
\end{equation}

Moreover, because the Schmidt vectors are orthonormal, for every
\(i>1\) we have
$$
|u_i\rangle\in u^\perp,
\qquad
|v_i\rangle\in v^\perp.
$$
Therefore
$$
|n\rangle\in u^\perp\otimes v^\perp.
$$
Equivalently, \(|n\rangle\) has no component whose \(A\)-side is
\(|u\rangle\) and no component whose \(B\)-side is \(|v\rangle\):
for arbitrary local vectors \(|w_A\rangle\) and \(|w_B\rangle\),
$$
\langle u w_B|n\rangle=0,
\qquad
\langle w_A v|n\rangle=0.
$$
This stronger orthogonality, not merely$\bra{uv}\ket{n} = 0$.

\rsub{Using the Rydberg number operator defined in Eq. (\ref{eq:proj_operator}), d}efine
\begin{equation}
P_1 := \langle uv|\rsub{\hat{\Pi}_{r}}|uv\rangle,
\quad
P_n := \langle n|\rsub{\hat{\Pi}_{r}}|n\rangle
\quad\rsub{P(\psi):=\langle \psi|\hat{\Pi}_r|\psi\rangle.}
\end{equation}

Since $|n\rangle\in u^\perp\otimes v^\perp$, we have
$$
\langle uv|\rsub{\hat{\Pi}_{r}}|n\rangle
=
\langle Q_A u\otimes v|n\rangle+\langle u\otimes Q_B v|n\rangle=0,
$$
so
\begin{equation}
P(\psi)=xP_1+(1-x)P_n.
\label{eq:P_split_note}
\end{equation}

For $0<s<1$ the largest Schmidt coefficient is simple, and Ref. \cite{Doultsinos2025b} implies the rate of change of the minimum entropy goes as
\begin{equation}
\dot S_{\min}(\psi)
=
-\frac{2}{\ln 2}\frac{1}{c_1}
\sum_{i>1} c_i\,\Im\bra{u_1 v_1}\Hint\ket{u_i v_i}
=
-\frac{2}{\ln 2}\sqrt{\frac{1-x}{x}}\;\Im\langle uv|\Hint|n\rangle.
\label{eq:Sdot_note}
\end{equation}
Therefore
\begin{equation}
|\dot S_{\min}(\psi)|
\le \frac{2}{\ln 2}\sqrt{\frac{1-x}{x}}\;|\langle uv|\Hint|n\rangle|.
\label{eq:Sdot_abs_note}
\end{equation}

Write
$$
\mu:=\langle ab|uv\rangle = u_a v_b,
\qquad
\nu:=\langle\alpha\beta|uv\rangle = u_\alpha v_\beta.
$$
Then 
\begin{equation}
\begin{aligned}
\langle uv|\Hint|n\rangle
&= V\bigl(\mu^*\,\langle\alpha\beta|n\rangle + \nu^*\,\langle ab|n\rangle\bigr), \\
\bigl|\langle uv|\Hint|n\rangle\bigr|
&\le V\sqrt{\bigl(|\mu|^2+|\nu|^2\bigr)
         \bigl(|\langle\alpha\beta|n\rangle|^2 + |\langle ab|n\rangle|^2\bigr)},
\end{aligned}
\end{equation}
where the second line uses the Cauchy--Schwarz inequality, $|\langle u,v\rangle|^2 \le \langle u,u\rangle\langle v,v\rangle$.

Bound $|\mu|^2+|\nu|^2$ by setting
$$
p:=|u_a|^2+|u_\alpha|^2,
\quad
q:=|v_b|^2+|v_\beta|^2,
\quad
P_1=p+q.
$$
Then
$$
|\mu|^2+|\nu|^2
= |u_a|^2|v_b|^2 + |u_\alpha|^2|v_\beta|^2
\le (|u_a|^2+|u_\alpha|^2)(|v_b|^2+|v_\beta|^2)
= pq
\le \frac{(p+q)^2}{4}
= \frac{P_1^2}{4}.
$$

Now expand $|n\rangle$ in the local product basis. Writing
\begin{align}
|u_i\rangle &= u_{i,g}|g\rangle + u_{i,a}|a\rangle + u_{i,\alpha}|\alpha\rangle, \\
|v_i\rangle &= v_{i,g}|g\rangle + v_{i,b}|b\rangle + v_{i,\beta}|\beta\rangle ,
\label{change_of_basis}
\end{align}
we obtain
\begin{equation}
|n\rangle
= \frac{1}{\sqrt{1-x}} \sum_{i>1}
  \sum_{j\in\{g,a,\alpha\}}\sum_{k\in\{g,b,\beta\}}
  c_i\,u_{i,j}\,v_{i,k}\,|jk\rangle.
\end{equation}

Define
$$
C_{jk}
:= \frac{1}{\sqrt{1-x}}\sum_{i>1} c_i\,u_{i,j}\,v_{i,k},
\quad
|n\rangle = \sum_{j,k} C_{jk}|jk\rangle.
$$

Each product state $|jk\rangle$ is an eigenvector of $\rsub{\hat{\Pi}_{r}}$ with eigenvalue $0$, $1$, or $2$, and the eigenvalue equals $2$ exactly when $j\in\{a,\alpha\}$ and $k\in\{b,\beta\}$. Therefore
$$
P_n = \langle n|\rsub{\hat{\Pi}_{r}}|n\rangle \ge 2\bigl(|C_{ab}|^2 + |C_{\alpha\beta}|^2\bigr)
= 2\bigl(|\langle ab|n\rangle|^2 + |\langle\alpha\beta|n\rangle|^2\bigr).
$$

Combining we obtain
\begin{equation}
\bigl|\langle uv|\Hint|n\rangle\bigr|
\le V\,\frac{P_1}{2}\sqrt{\frac{P_n}{2}}.
\label{eq:matrix_bound_note}
\end{equation}

If $P_1=0$ or $P_n=0$, then \eqref{eq:matrix_bound_note} gives $\langle uv|\Hint|n\rangle = 0$, hence $\dot S_{\min}(\psi)=0$ by \eqref{eq:Sdot_note}, and the desired lower bound is trivial. We therefore assume $P_1>0$ and $P_n>0$.

Combining \eqref{eq:Sdot_abs_note}, \eqref{eq:matrix_bound_note}, and \eqref{eq:P_split_note},
\begin{equation}
\frac{P(\psi)}{|\dot S_{\min}(\psi)|}
\ge \frac{\ln 2}{V}\sqrt{\frac{x}{1-x}}\;
   \frac{xP_1 + (1-x)P_n}{P_1\sqrt{P_n/2}}.
\label{eq:ratio_note}
\end{equation}

Now set $t=\sqrt{P_n/2}\in(0,1]$. Then
$$
\frac{x P_1+(1-x)P_n}{P_1\sqrt{P_n/2}}
=
\frac{x}{t}+\frac{2(1-x)}{P_1}\,t
\ge
\frac{x}{t}+(1-x)t,
$$
because $P_1 \le 2$. Finally,
\begin{equation}
\frac{x}{t}+(1-x)t-1
=
\frac{(1-t)\bigl(x-(1-x)t\bigr)}{t}
\ge 0,
\label{eq:scalar_closure_audited}
\end{equation}
since $0<t\le 1$ and $x\ge 1/2$. Therefore
$$
\frac{x P_1+(1-x)P_n}{P_1\sqrt{P_n/2}}\ge 1,
$$
which implies from \eqref{eq:ratio_note} that
\begin{equation}
\frac{P(\psi)}{|\dot S_{\min}(\psi)|}
\ge
\frac{\ln 2}{V}\sqrt{\frac{x}{1-x}}.
\label{eq:pointwise_note}
\end{equation}

This lower bound is saturated by
$$
|\psi_s\rangle = \sqrt{x}\,|ab\rangle + i\,\operatorname{sgn}(V)\sqrt{1-x}\,|\alpha\beta\rangle,
$$
for which $P_1=P_n=2$ and $|\langle uv|\Hint|n\rangle|=V$.
Hence
\begin{equation}
G_{\mathrm{full}}(s) \equiv \min_{\psi_s}
\frac{P(\psi_s)}{|\dot S_{\min}(\psi_s)|}
= \frac{\ln 2}{V}\frac{1}{\sqrt{2^s - 1}}.
\label{eq:GofS}
\end{equation}
The proof above is for $0<s<1$; the value at $s=1$ follows by continuity from below and is attained by the wavefunction $ \ket{\psi}=(|ab\rangle + i\,\operatorname{sgn}(V)|\alpha\beta\rangle)/\sqrt{2}$.

Integrating $G_{\mathrm{full}}(s)$ over the interval $s\in[0,1]$ yields
\begin{equation}
\etafull = V\int_0^1 G_{\mathrm{full}}(s)\,ds = \frac{\pi}{2}.
\end{equation}

The proof presented above relies on rank-two coupling to the Rydberg manifold. A two-eigenstate model with only rank-one coupling would not be able to map the entanglement generated from the exchange interaction back to the qubit manifold. Therefore, the increased fidelity bound depends on the \textit{rank} of the pulse sequence, not the dimensionality of the atomic model. Since the rank of the pulse must be no greater than the dimensionality of the atomic model, though, higher-dimensional models are required to accurately evaluate pulses of large rank. \\

\section{Physical Realization}
\label{sec:sm:physical-realization}
\subsection{Rank-Two Pulses}
Since the single-atom target states $\ket{a}$ and $\ket{b}$ are  spectrally isolated from the exchange states $\ket{\alpha}$ and $\ket{\beta}$, rank-two access to the Rydberg pair state manifold will require one laser per single-atom state. In symmetric structures, where both atoms are the same species, $\ket{a}$ and $\ket{b}$ are identical states, so they share an excitation laser. 
Functionally, to realize rank-two control, one would have to add lasers that address $\ket{\alpha}$ and $\ket{\beta}$ in addition to the existing lasers that excite $\ket{a}$ and $\ket{b}$.
This is most-easily achieved by coupling to Rydberg states of different values of $\ell$, for instance $np$ and $ns$ states, where, for the case of an $s$ orbital ground state,  the $p$ state is coupled via a one-photon excitation and two-photon excitation to the $s$ state.

\subsection{F\"orster Resonances and Interaction Strength}
Here, we consider the needed interaction strength $V$ to achieve the gate fidelities presented in the main text.
Since we have identified pulses nearly saturating the gate fidelity bound, we  calculate interaction strength as given by the fundamental gate fidelity bound, shown in Fig. \ref{fig:V_requirement}.
From this figure, we can see that in order to generate entanglement with fidelity $\mathcal{F}>0.999$, an interaction strength of at least 2-5 MHz is required, depending on the lifetime of the specific state addressed. 

\begin{figure}
    \centering
    \includegraphics[width=0.5\linewidth]{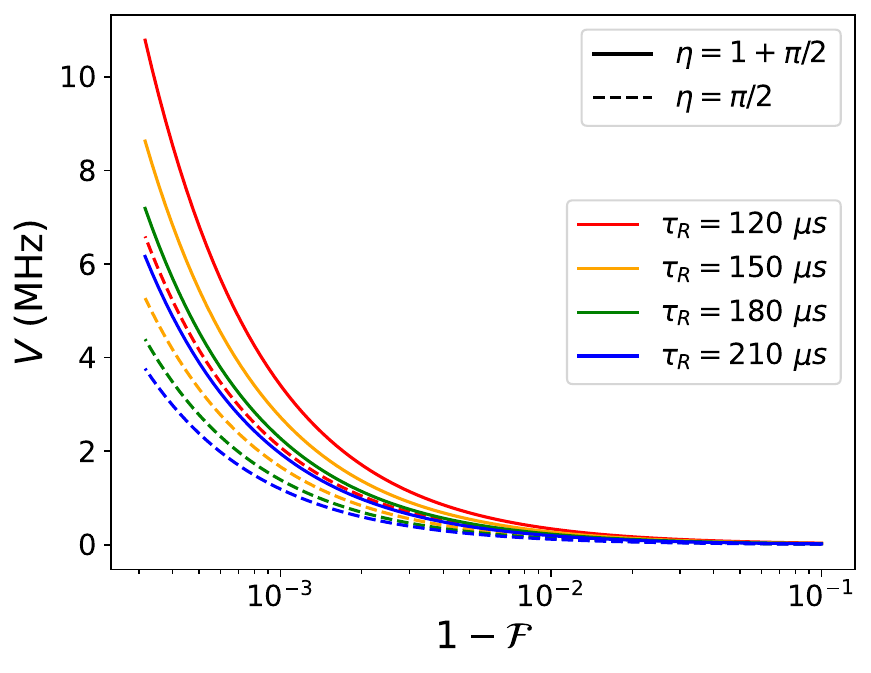}
    \caption{Interaction energy required to generate entanglement at a given fidelity.}
    \label{fig:V_requirement}
\end{figure}

F\"orster resonances, known for their favorable scaling properties, can provide interaction strengths of this magnitude at impressively large atom separation distances. The resonance between Cs atoms $\ket{64s_{1/2}} \otimes \ket{68s_{1/2}}$, for example, provides 2.88 MHz at $11\um$, thereby providing the possibility of an entangling gate of up to 4 atoms away in a tightly spaced lattice at moderate values of $n$.  

Additional resonances at higher principal quantum number could be used to increase this spacing or afford higher gate fidelity at the same spacing. Further, higher $n$ states provide longer Rydberg lifetimes, thereby further increasing the fundamental fidelity limit. 

For rank-two pulses, this distance can be extended as the fundamental limit is approximately 40 percent lower than for rank-one pulses. For the resonance considered above, this translates to an atom separation distance of $\sim13\ \um$, following $1/r^3$ scaling of resonant dipole-dipole interactions.

\section{Existing Gate Protocols with F\"orster Resonances}
\label{sec:sm:existing_protocols}
Here, we provide details on the gate pulses used to calculate gate fidelities using gate pulses commonly used in neutral atom quantum processors. 
\subsection{$\bf \pi-2\pi-\pi$}

For the original $\pi-2\pi-\pi$ Rydberg gate \cite{Jaksch2000} we assume degree-6 super-Gaussian pulse profiles for each pulse. Explicitly, the functional form of each pulse $i$ for this gate is 
\begin{equation}
    \Omega_i(t; t_0, \tau)= \Omega_{\rm max}e^{-((t-t_0)/\tau)^6},
\end{equation}
for $\Omega_c(t)$ and $\Omega_t(t)$ defined as the Rabi rates on the control and target qubits, respectively. The three pulses are timed such that pulse overlap is $0.1\%$ between the $\pi$ and $2\pi$ tails and the same percentage sets the start and end times for the gate.

\subsection{Adiabatic Rapid Passage (ARP)}
The parameters that describe the Rabi rate ($\Omega$) and detuning ($\Delta$) of this pulse are \cite{Saffman2020}
\begin{equation}
    \Omega(t)=\Omega_{\rm max}\,\frac{e^{-(t-t_0)^4/\tau^4}-a}{1-a},
\qquad
a=\exp\!\left[-\left(\frac{t_0}{\tau}\right)^4\right],
\end{equation}
\begin{equation}
\Delta_r(t)=-\Delta_r\cos\!\left(\frac{2\pi}{T}\,t\right),
\qquad t\in[0,T].
\end{equation}

\subsection{Time Optimal (TO)}

TO gates \cite{Jandura2022} have a pulse profile with uniform Rabi frequency and a time-varying phase. For this phase, parameterize the phase with an assumed functional form
\begin{equation}
    \Phi(t)=A\cos\left[\frac{\omega}{\Omega}\left(t-\frac{T}{2}\right)-\phi\right]+\frac{\delta}{\Omega}t,
\end{equation}
which approximates the optimal phase function for a total gate time $T$ within $0.2\%$ of the optimum \cite{Evered2023}.
Via gradient descent methods, one can tune the parameters $\delta, A, \omega,\phi, T$ to optimize gate fidelity. We set the total gate time to $T= 7.612 / \Omega$. For our analysis we perform a two-step optimization. The first step uses SciPy's L-BFGS-B minimizer, which uses numerical gradients to walk to a local minimum. Due to the structure of the landscape, this optimizer can get stuck in local minima. To resolve this, stage 2 initializes the optimizer from different starting points, keeping the best result seen. We combine fully random initializations with initializations drawn from a Gaussian distribution around the best-found gate parameters. This does not guarantee full convergence to global maxima, but we heuristically find that the gates found by this protocol result in parameters that match the expected behavior of TO gates. 

\rsub{For the optimizations presented in the main body of this letter, we use Gaussian kicks with a distribution of size $0.3\times$ the best-found parameters (i.e. a Gaussian of size 0.3 and center 1).}
\rsub{\subsection{Almost Resonant Mudulated Driving (ARMD)}}

\rsub{The almost-resonant modulated-driving (ARMD) gate \cite{fan2024fast} implements a CZ operation using independently modulated Rabi frequencies on the control and target atoms. Each waveform Rabi frequency is represented by a truncated Fourier series,
\begin{equation}
    \Omega_{j}(t)
    =
    \frac{1}{2N+1}
    \left[
    a_{0}^{(j)}
    +
    2\sum_{n=1}^{N}
    a_{n}^{(j)}
    \cos\!\left(\frac{2\pi n t}{T}\right)
    \right],
    \qquad j\in\{c,t\},
\end{equation}
where $N=5$ and $T=0.32~\mu\mathrm{s}$ for the rescaled Rabi frequency. For the simulations presented here, the Fourier coefficients after rescaling are
\begin{align}
\bm{a}^{(c)}
&=
\left(
68.88011384,
-28.76983280,
-10.21344717,
1.62006403,
3.27143365,
-0.35218783
\right),\\
\bm{a}^{(t)}
&=
\left(
68.88011384,
-4.64105301,
-15.65279260,
-8.28032729,
-3.91319815,
-1.95659908
\right).
\end{align}
The two Rabi frequencies are rescaled using a common normalization such that
\begin{equation}
    \max_{c,t}\frac{|\Omega_j(t)|}{2\pi}
    =10~\mathrm{MHz},
\end{equation}
which preserves their relative amplitudes. }

\rsub{The Rabi frequencies on the control and target atoms, as well as the gate performance under one- and two-eigenstate models is shown in Fig. \ref{fig:armd}}

\begin{figure}
    \centering
    \rsub{\includegraphics[width=0.8\linewidth]{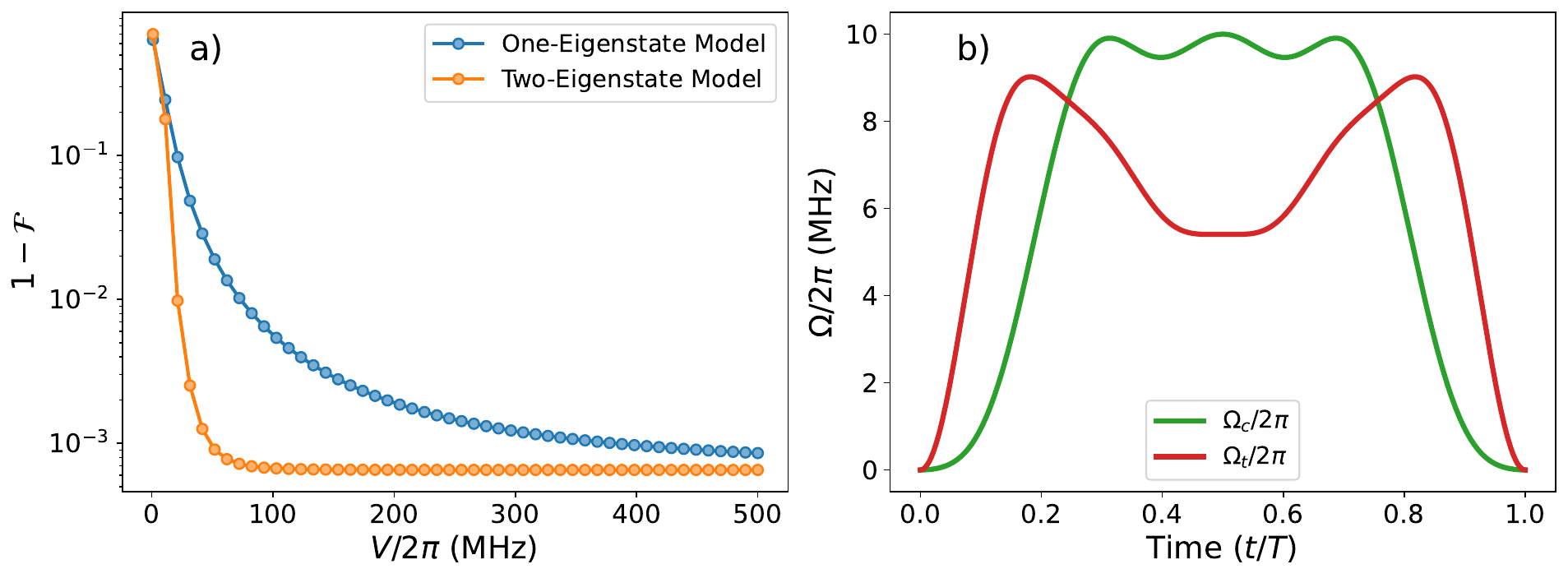}
    \caption{\textbf{(a)} Performance of the ARMD gate under one- and two-eigenstate models. \textbf{(b)} Time-varying Rabi rates used for the analysis in panel (a).}
    \label{fig:armd}}
\end{figure}

\section{Gate Error Scaling}
Here, we use the standard rank-one $\pi-2\pi-\pi$ gate as an example to show the reduced errors when considering the two-eigenstate model. 
\subsection{One-Eigenstate Model}
After the first $\pi$ pulse, the map 
$\ket{11}\mapsto\ket{a1}$ is realized.
The $2\pi$ pulse causes off-resonant excitation to the doubly-populated Rydberg state $\ket{ab}$, due to the fact that $\ket{ab}$  has finite detuning $V$. The drive Hamiltonian for this $2\pi$ pulse, in the basis of $\{\ket{a1}, \ket{ab}\}$ is 

\begin{equation}
    H_{2\pi}=\begin{pmatrix}
        0 & \Omega/2\\
        \Omega/2 & V
    \end{pmatrix}.
\end{equation}
Perturbation theory at second order then gives the energy shift on $\ket{a1}$ from this drive:
\begin{equation}
    \Delta E_{\ket{a1}}=\frac{|\bra{a1}H_{2\pi}\ket{ab}|^2}{E_{\ket{a1}}-E_{\ket{ab}}}=\frac{\left(\Omega/2\right)^2}{0-V}=-\frac{\Omega^2}{4V}
\end{equation}
This energy shift causes a phase
\begin{equation}
    \phi=-\Delta E_{\ket{a1}}\times t_{2\pi}=\frac{\Omega^2}{4V}\times\frac{2\pi}{\Omega}=\frac{\pi\Omega}{2V}.
\end{equation}
This phase limits gate fidelity and scales as $\frac{\Omega}{V}$.
\subsection{Two-Eigenstate Model}
In the two-eigenstate model, and in the basis $\{\ket{a1},\ket{ab},\ket{\alpha\beta}\}$, the rank-one drive Hamiltonian for the $2\pi$ pulse is
\begin{equation}
    H_{2\pi}=\begin{pmatrix}
        0 & \Omega/2 & 0\\
        \Omega/2 & 0 & V\\
        0 & V & 0
    \end{pmatrix}.
\end{equation}
The Hamiltonian has eigenvalues $\{0, \pm\Lambda\}$ with $\Lambda = \sqrt{V^2 + \Omega^2/4}$, and (non-normalized) eigenvectors
\begin{align}
\ket{\phi_0} &= \ket{a1} + -\epsilon\ket{\alpha\beta}, \\
\ket{\phi_\pm} &=\epsilon\ket{a1}\pm\tfrac{\Lambda}{V}\ket{ab}+\ket{\alpha\beta},
\end{align}
where $\epsilon \equiv \Omega/(2V)$. Decomposing the initial state $\ket{\psi(0)} = \ket{a1}$ into this eigenbasis, evolving under $H_{2\pi}$ for time $t$, and projecting back onto $\ket{a1}$ yields
\begin{equation}
\langle a1|\psi(t)\rangle = \frac{1 + \epsilon^2\cos(\Lambda t)}{1+\epsilon^2}.
\end{equation}
This quantity is real at all times, with an amplitude modulation at frequency $\Lambda \approx V$ with magnitude $\epsilon^2 \approx \Omega^2/(4V^2)$. The coherent error for the two-eigenstate model therefore scales quadratically in $\Omega/V$, compared to the linear scaling for the one-eigenstate model, thereby suppressing errors.

\section{Imperfect F\"orster Resonances}
We consider gate performance in the two-eigenstate model, allowing $o_1=1-o_2$ to deviate from the nominal value of $o_1=1/2$ in a perfect F\"orster resonance. Due to finite coupling between other non-target or exchange states, real atomic systems rarely exhibit perfect resonances. Such systems can be tuned into resonance, though, by applying local fields. In this section, we evaluate the sensitivity of the increase in gate performance to the symmetry of the pair potential. We use the same three gate forms analyzed in this letter: ARP, $\pi-2\pi-\pi$, and TO, shown in Fig \ref{fig:imperfect_resonances}. Gate fidelity monotonically increases with the pair state symmetry. For the TO gate, the gate was optimized under the two-eigenstate model with $o_1=o_2=1/2$ and evaluated with a scan of $o_1$. Due to the optimization procedure, higher gate fidelities than those presented can be achieved with imperfect resonances, but require different pulses. We present the fidelity variation for the gate optimized for a perfect F\"orster resonance.

\begin{figure*}
    \centering
    \includegraphics[width=\linewidth]{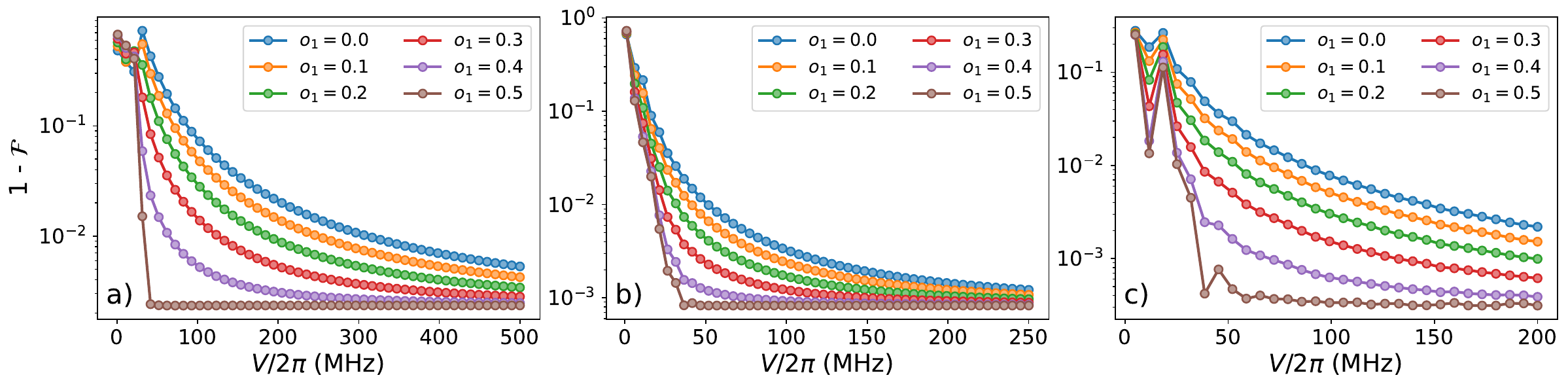}
    \caption{Gate performance with imperfect F\"orster resonances. \textbf{(a)} ARP gate. \textbf{(b)} $\pi-2\pi-\pi$ gate. \textbf{(c)} TO gate.}
    \label{fig:imperfect_resonances}
\end{figure*}

\rsub{\subsection{Atomic States}}
\rsub{We consider now real atomic systems, whose eigenstates are determined by a direct diagonalization of the dipole-dipole interaction Hamiltonian, summed over many states. We define a figure-of-merit, the overlap of the dominant two eigenstates to each of the two channels in the F\"orster resonance. First, we recall the eigenstates of the perfect resonance}
\rsubmath{
\begin{align}
    |e_{1}\rangle = \frac{1}{\sqrt{2}}\left(|ab\rangle +|\alpha \beta\rangle \right),\quad |e_{2}\rangle = \frac{1}{\sqrt{2}}\left(|ab\rangle -|\alpha \beta\rangle \right).
\end{align}
And for the general real atomic eigenstates
\begin{align}
    |\psi_{1}\rangle = c_{11}|ab\rangle +c_{12}|\alpha \beta\rangle+\cdots ,\quad |\psi_{2}\rangle = c_{21}|ab\rangle + c_{22}|\alpha \beta\rangle +\cdots.
\end{align}
First we define their populations inside the selected two-state subspace
\begin{align}
    L_{1} = |c_{11}|^{2} + |c_{12}|^{2},\quad L_{2} = |c_{21}|^{2} + |c_{22}|^{2}
\end{align}
And also define the normalized coherence of each eigenstate
\begin{align}
    q_{1} = \frac{2c_{11}c_{12}}{L_{1}},\quad  q_{2} = \frac{2c_{21}c_{22}}{L_{2}}
\end{align}}

\rsub{Then we can define the strength of a F\"orster resonance using the function}
\rsubmath{\begin{equation}
    M= L_{1}L_{2}\frac{\abs{q_{1} - q_{2}}^2}{4}.
\end{equation}}
\rsub{$M$ is therefore bounded on the interval $[0,1]$, where a perfect single-channel interaction has $M=0$ and a perfect F\"orster resonance has $M=1$. As an example, we show the pair state $\ket{\text{Rb } 55 p_{3/2},m_j=3/2}\otimes \ket{\text{Cs } 50 p_{3/2},m_j=3/2}$, which exhibits a strong F\"orster resonance in Fig. \ref{fig:real-forster-resonance}. Pair states were calculated with the Alkali Rydberg Calculator \cite{Sibalic2017}, with a Hamiltonian generated from the following parameters: $\Delta n =\Delta \ell = 5, \Delta E_{\rm max}= 25$ GHz, in a $\theta=\phi=0$ geometry.}

\begin{figure}
    \centering
    \rsub{
    \includegraphics[width=0.95\linewidth]{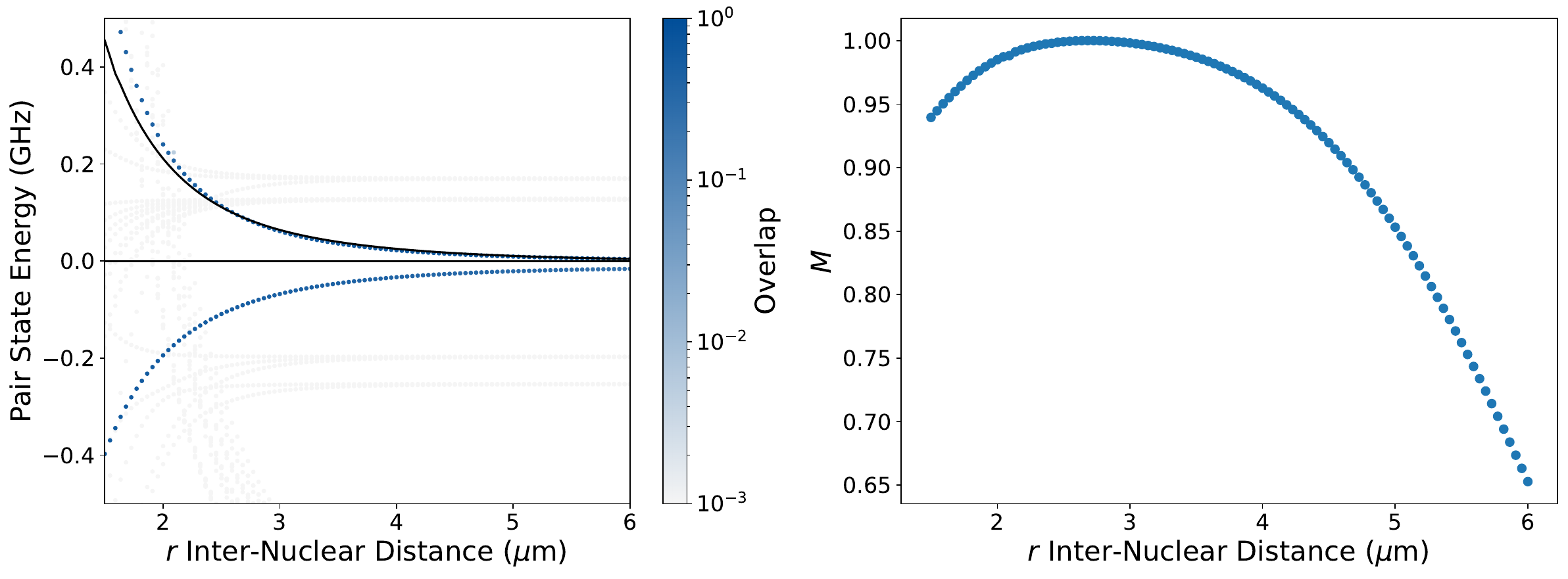}
    \caption{An inter-species F\"orster resonance with the states $\ket{\text{Rb } 55 p_{3/2},m_j=3/2}\otimes \ket{\text{Cs } 50 p_{3/2},m_j=3/2}$. \textbf{(a)} The atomic pair state, found by a diagonalization of the atom-atom interaction Hamiltonian. The single-channel interaction energy $V$ is shown in black line. \textbf{(b)} The strength of the resonance, characterized by the figure-of-merit $M$ for the same pair state.}
    \label{fig:real-forster-resonance}
    }
\end{figure}

\rsub{There exist other pair states that naturally exhibit high F\"orster character, but we only show this one for simplicity. Additionally, we mention that states near a resonance can be tuned further into resonance through the application of external electric and magnetic fields.}

\rsub{\section{Phase Accumulation In Interaction Gate}}

\rsub{Here we decompose the phase accumulation of the rank-two $\pi$--gap--$\pi$ protocol into dynamical and geometric contributions, and show that the suppression of coherent errors follows from an exact, symmetry-protected cancellation of AC-Stark shifts at a perfect F\"orster resonance.}

\rsub{The gate is diagonal in the basis
$\{\ket{00},\ket{11},\ket{\Psi^+},\ket{\Psi^-}\}$ with $\ket{\Psi^\pm}=(\ket{01}\pm\ket{10})/\sqrt{2}$. Each of these branches undergoes a cyclic evolution $\ket{\psi(T)}=e^{i\varphi_{\rm tot}}\ket{\psi(0)}$, so the total phase of
each branch splits into a dynamical and a geometric part,}
\rsubmath{
\begin{equation}
\varphi_{\rm tot}=\varphi_{\rm dyn}+\varphi_{\rm geo},
\qquad
\varphi_{\rm dyn}=-\int_0^T dt\,\bra{\psi(t)}H(t)\ket{\psi(t)}.
\end{equation}}

\rsub{The excitation pulse maps to the eigenstates,}
\rsubmath{\begin{equation}
\ket{\Psi^+}\mapsto -\ket{e_1}=-\tfrac{1}{\sqrt2}\left(\ket{ab}+\ket{\alpha\beta}\right),
\qquad
\ket{\Psi^-}\mapsto -\ket{e_2}=-\tfrac{1}{\sqrt2}\left(\ket{ab}-\ket{\alpha\beta}\right),
\end{equation}}
\rsub{while $\ket{00},\ket{11}$ map onto the
stationary states $\ket{a\beta},\ket{\alpha b}$ with $\bra{a\beta}H_{\rm int}\ket{a\beta}=0$.
The decomposition in the limit $\Omega\gg V$ is then:}
\rsub{
\begin{itemize}
\item[--] \emph{Pulses:} for resonant Rabi driving, $\expval{H_{\rm drive}}=0$, so the pulses contribute \emph{zero dynamical phase}. The pulse
contribution per two-atom branch is therefore geometric, but since each branch undergoes the same transformation (i.e. an excitation and a de-excitation), the phase on each branch is the same, and can thus be ignored. 
\item[--] \emph{Gap:} the exchange interaction itself is a dynamical phase; the stationary branches $\ket{00},\ket{11}$ accumulate exactly zero phase.
\end{itemize}
}
\rsub{The entangling phase is thus purely dynamical.}

%\bibliography{bib_files/atomic, bib_files/optics, bib_files/qc_refs, bib_files/rydberg, bib_files/saffman_refs, bib_files/this_paper}

% \end{document}

\end{document}